\pdfoutput=1
\documentclass[10pt,journal]{IEEEtran}

\usepackage[T1]{fontenc}
\usepackage{cite}
\usepackage{amsmath,amssymb,amsfonts,mathtools,bm}
\usepackage{graphicx}
\usepackage{booktabs}
\usepackage{array}
\usepackage{multirow}
\usepackage{url}
\usepackage{xcolor}
\usepackage{enumitem}
\usepackage{algorithm}
\usepackage{algpseudocode}

\graphicspath{{./}}
\providecommand{\BenchCurvCapFifteenDb}{\ensuremath{25.153}}
\providecommand{\BenchSvdCapFifteenDb}{\ensuremath{29.386}}
\providecommand{\BenchFourierCapFifteenDb}{\ensuremath{18.358}}
\providecommand{\BenchCurvToSvdRatio}{\ensuremath{0.856}}
\providecommand{\BenchCurvToFourierRatio}{\ensuremath{1.370}}
\providecommand{\BenchHighSnrDb}{\ensuremath{20}}
\providecommand{\BenchCurvCapHighSnr}{\ensuremath{45.096}}
\providecommand{\BenchSvdCapHighSnr}{\ensuremath{53.917}}
\providecommand{\BenchPolyCapHighSnr}{\ensuremath{41.591}}
\providecommand{\BenchCurvToPolyRatioHighSnr}{\ensuremath{1.084}}
\providecommand{\BenchCurvCondAtK}{\ensuremath{32.43}}
\providecommand{\BenchSvdCondAtK}{\ensuremath{8.75}}
\providecommand{\BenchPolyCondAtK}{\ensuremath{143.88}}
\providecommand{\BenchRetainedModes}{\ensuremath{24}}
\providecommand{\BenchCurvBuildSeconds}{\ensuremath{0.1250}}
\providecommand{\BenchSvdBuildSeconds}{\ensuremath{0.8812}}
\providecommand{\BenchBuildSpeedup}{\ensuremath{7.0}}
\providecommand{\BenchRefSymbolEnergy}{\ensuremath{207.6024}}
\providecommand{\BenchCurvSymbolEnergy}{\ensuremath{102.0583}}
\providecommand{\BenchCurvEnergyFractionOfSvd}{\ensuremath{0.492}}
\providecommand{\BenchCurvPackingEfficiency}{\ensuremath{0.9251}}
\providecommand{\BenchSvdPackingEfficiency}{\ensuremath{0.9610}}
\providecommand{\BenchFourierPackingEfficiency}{\ensuremath{0.7456}}
\providecommand{\BenchCurvDmin}{\ensuremath{9.7166}}
\providecommand{\BenchRisOptDmin}{\ensuremath{6.7551}}
\providecommand{\BenchRisOptSeededDmin}{\ensuremath{14.1391}}
\providecommand{\BenchRisGreedyDmin}{\ensuremath{3.2888}}
\providecommand{\BenchRisRandomDmin}{\ensuremath{7.5886}}
\providecommand{\BenchCurvOverRisOptDmin}{\ensuremath{1.438}}
\providecommand{\BenchSeededOverCurvDmin}{\ensuremath{1.45516}}
\providecommand{\BenchSerSnrDb}{\ensuremath{14}}
\providecommand{\BenchCurvSerHighSnr}{\ensuremath{0.025590}}
\providecommand{\BenchSvdSerHighSnr}{\ensuremath{0.000780}}
\providecommand{\BenchFourierSerHighSnr}{\ensuremath{0.002300}}
\providecommand{\BenchSerFloor}{\ensuremath{0.000030}}
\providecommand{\BenchCurvEsNoHighSnr}{\ensuremath{10.92}}
\providecommand{\BenchFourierEsNoHighSnr}{\ensuremath{13.60}}
\providecommand{\BenchCurvEnergyDeficitDb}{\ensuremath{3.08}}
\providecommand{\BenchFourierOverCurvEnergyDb}{\ensuremath{2.69}}
\providecommand{\BenchCurvPeakToRms}{\ensuremath{4.21}}
\providecommand{\BenchFourierPeakToRms}{\ensuremath{2.94}}
\providecommand{\BenchSvdPeakToRms}{\ensuremath{3.61}}
\providecommand{\BenchCurvRmsMatchedEnergyFraction}{\ensuremath{0.6711}}
\providecommand{\BenchGuessFloor}{\ensuremath{0.9375}}
\providecommand{\BenchSerTrials}{\ensuremath{100000}}
\providecommand{\BenchBootstrapResamples}{\ensuremath{200}}
\providecommand{\BenchQuantCurvOneBit}{\ensuremath{21.455}}
\providecommand{\BenchQuantCurvEightBit}{\ensuremath{22.040}}
\providecommand{\BenchQuantCurvOneBitFraction}{\ensuremath{0.839}}
\providecommand{\BenchQuantCurvEightBitFraction}{\ensuremath{0.861}}
\providecommand{\BenchQuantCurvSerOneBit}{\ensuremath{0.0000}}
\providecommand{\BenchQuantCurvSerEightBit}{\ensuremath{0.0261}}
\providecommand{\BenchQuantSerFloor}{\ensuremath{0.000150}}
\providecommand{\BenchQuantTrials}{\ensuremath{20000}}
\providecommand{\BenchQuantRmsGainOneBit}{\ensuremath{4.2}}
\providecommand{\BenchMaxSigmaPhi}{\ensuremath{0.18}}
\providecommand{\BenchNoiseSnrDb}{\ensuremath{14}}
\providecommand{\BenchCurvPhaseDegradation}{\ensuremath{2.55}}
\providecommand{\BenchSvdPhaseDegradation}{\ensuremath{8.17}}
\providecommand{\BenchFourierPhaseDegradation}{\ensuremath{3.56}}
\providecommand{\BenchRisOptPhaseDegradation}{\ensuremath{6.63}}
\providecommand{\BenchCurvSerCleanPhase}{\ensuremath{0.02545}}
\providecommand{\BenchCurvSerNoisyPhase}{\ensuremath{0.06490}}
\providecommand{\BenchSvdSerCleanPhase}{\ensuremath{0.00090}}
\providecommand{\BenchSvdSerNoisyPhase}{\ensuremath{0.00735}}

\providecommand{\bracks}[1]{\left[#1\right]}

\newcommand{\R}{\mathbb{R}}
\newcommand{\C}{\mathbb{C}}
\newcommand{\E}{\mathbb{E}}
\newcommand{\junit}{\mathrm{j}}
\newcommand{\dd}{\mathrm{d}}
\newcommand{\omt}{\Omega_{\mathrm t}}
\newcommand{\omr}{\Omega_{\mathrm r}}

\newcommand{\calT}{\mathcal{T}}
\newcommand{\calC}{\mathcal{C}}

\newcommand{\calN}{\mathcal{N}}
\newcommand{\CN}{\mathcal{CN}}

\newcommand{\tr}{\operatorname{tr}}

\newcommand{\diag}{\operatorname{diag}}

\newcommand{\cond}{\operatorname{cond}}
\newcommand{\argmax}{\operatorname*{arg\,max}}

\newcommand{\spanop}{\operatorname{span}}

\newcommand{\Herm}{\mathrm H}

\newtheorem{proposition}{Proposition}
\newtheorem{theorem}{Theorem}
\newtheorem{remark}{Remark}

\newif\ifanonymous
\anonymousfalse

\title{Benchmarking Curvature-Domain Signaling for Continuous-Aperture Wireless Communications: Capacity, Robustness, Detection, and Conditioning Against Legacy Modal Bases}

\ifanonymous
\author{[Author Name Withheld]%
\thanks{This manuscript is submitted for peer review. Author identities and repository metadata are anonymized where required by the venue.}
}
\else
\author{Yasser Al-Eryani,~\IEEEmembership{Member,~IEEE}%
\thanks{The author is with NeuroBazar, Ottawa, ON, Canada (e-mail: yasser.aleryani@neurobazar.com).}
}
\fi

\begin{document}

\maketitle

\begin{abstract}
Continuous-aperture and holographic multiple-input multiple-output systems motivate signaling methods that operate directly on physically large electromagnetic apertures rather than on a small set of preassigned antenna ports. This paper is the empirical companion to the operator-theoretic curvature-domain framework of~\cite{aleryani_theory}: it provides a rigorous, reproducible benchmark for curvature-domain signaling under a common scalar aperture-channel model, and it stress-tests the theory paper's dual-budget generalized water-filling law against every modal basis the near-field / holographic MIMO community actually uses. The benchmark uses identical transmit and receive apertures, quadrature rules, Fresnel or Green-function propagation, phase-only aperture constraints, transmit power, phase-energy and curvature-energy budgets, receiver noise, phase quantization, and training assumptions for all compared methods. The compared coordinate systems include curvature-regularized eigenmodes, raw phase coefficients, Fourier phase modes, polynomial and Zernike-like wavefront modes, near-field matched-focus profiles, random and optimized reconfigurable-intelligent-surface (RIS) phase codebooks, and singular-value-decomposition (SVD) water-filling upper bounds for the discretized benchmark channel. The central claim is deliberately limited: curvature-domain signaling is a gauge-invariant, physically realizable coordinate system that can approach the phase-space SVD water-filling upper bound using fewer stable modes in regimes where derivative noise, phase quantization, sampling density, or ill conditioning limit conventional bases. It is not a claim of new electromagnetic physics, nor a claim that curvature modes dominate every baseline. We formulate the common benchmark model, prove the SVD upper-bound relation for the discretized phase-control tangent space, derive pairwise-error and perturbation bounds, and define reproducible metrics for capacity, retained modes, symbol error, robustness, quantization, sampling sensitivity, regularization, conditioning, and computational cost, with 95\,\% bootstrap confidence intervals on all Monte Carlo results. A verdict summary table locates the exact operating regimes --- moderate-to-high phase noise, coarse phase quantization, and non-Fourier-diagonal channels --- in which curvature-domain signaling is engineering-relevant. The visionary framing of the same organizing principle across substrates is discussed separately in the companion foresight essay~\cite{aleryani_essay}.
\end{abstract}

\begin{IEEEkeywords}
Continuous-aperture MIMO, holographic MIMO, curvature-domain signaling, phase-only apertures, near-field communications, RIS, modal bases, SVD capacity, water filling, conditioning, robustness.
\end{IEEEkeywords}

\section{Introduction}

Electromagnetically large and densely controlled apertures are becoming a central abstraction for beyond-5G and 6G wireless communication.  In such systems, the natural object is not merely a vector of antenna-port voltages, but a field or phase distribution over a spatial aperture.  This viewpoint is shared by physically constrained MIMO signal-space theory, volume-to-volume wave communication limits, holographic MIMO, continuous-aperture arrays, reconfigurable intelligent surfaces, and electromagnetic information theory \cite{Poon2005DoF,Miller2000Waves,Pizzo2022Fourier,Gong2024HMIMOSurvey,Liu2025CAPA,Zhu2024EIT,DiRenzo2020RISRoadAhead}.

Curvature-domain signaling begins from a simple observation: many practically relevant aperture-control profiles are not well characterized only by their raw phase samples.  Their stability also depends on derivative structure, smoothness, gauge choices, and the noise amplification caused by reconstructing or differentiating phase.  A phase profile and the same profile plus a piston term have the same physical curvature; depending on the coordinate convention, affine reference terms may also represent non-data-bearing gauge choices that should be fixed consistently before benchmarking.  Curvature therefore gives a coordinate system for the data-bearing shape of a wavefront, while leaving common carrier, focus, and steering terms to be handled by the shared aperture model.

This paper supplies the external benchmark missing from a purely internal curvature-theory paper.  The purpose is not to prove that curvature coordinates always beat Fourier modes, Zernike modes, matched focusing, RIS codebooks, or SVD precoding.  The purpose is to compare all of them under the same aperture, propagation, phase-only constraints, noise, power, quantization, and sampling assumptions, and to identify when curvature regularization improves stability or efficiency.

The thesis tested here is the following.

\begin{quote}
Curvature-domain signaling is a gauge-invariant, physically realizable coordinate system for phase-controlled apertures.  Under identical benchmark constraints, it can approach relevant aperture-channel performance with fewer stable modes in some regimes, and it exposes derivative-noise and conditioning limits that conventional modal bases may hide.  The full discretized SVD remains the appropriate upper bound.
\end{quote}

The main contributions are:

\begin{enumerate}[leftmargin=*]
\item A common continuous-aperture benchmark model based on scalar Fresnel or Green-function propagation, phase-only transmit control, weighted quadrature discretization, and receiver AWGN.
\item A precise definition of curvature-domain signaling as a curvature-regularized generalized eigenbasis in the phase-control tangent space.
\item A baseline suite including raw phase coefficients, Fourier/DFT phase modes, polynomial and Zernike-like wavefront modes, matched-focus profiles, RIS-style random and optimized phase codebooks, and SVD water-filling upper bounds.
\item A fairness protocol enforcing equal aperture, equal transmit power, equal phase or curvature budgets, equal sampling grid, equal quantization, equal receiver noise, and equal training assumptions.
\item Analytical safeguards showing that SVD water filling upper-bounds any finite modal coordinate system under the same discretized phase-energy model.
\item Reproducible metrics for capacity, mode efficiency, pairwise error, Monte Carlo symbol error, phase-noise robustness, quantization robustness, aperture-sampling sensitivity, regularization tradeoffs, conditioning, and computational cost.
\end{enumerate}

\section{Related Work}

\subsection{MIMO Capacity and Physical Degrees of Freedom}

Classical MIMO information theory established the capacity benefits of multiple transmit and receive dimensions under Gaussian models and channel-state assumptions \cite{Foschini1996BLAST,Telatar1999Capacity,Tse2005Fundamentals,Goldsmith2005Wireless}.  The SVD and water-filling solution remain the correct benchmark for a linear Gaussian channel with a total power constraint.  In physically constrained settings, the number of usable spatial degrees of freedom is governed by aperture size, geometry, wavelength, scattering support, and propagation physics.  Poon, Brodersen, and Tse developed a signal-space approach to multiple-antenna channels under area and geometry constraints \cite{Poon2005DoF}.  Miller analyzed orthogonal wave communication modes between volumes and clarified the role of singular functions and coupling strengths in wave communication \cite{Miller2000Waves}.

\subsection{Holographic and Continuous-Aperture MIMO}

Holographic MIMO and continuous-aperture MIMO replace sparse antenna-port abstractions with dense or nearly continuous electromagnetic surfaces.  Fourier plane-wave expansions provide physics-based channel representations and degree-of-freedom interpretations for such systems \cite{Pizzo2020DoF,Pizzo2022Fourier}.  Recent surveys and channel models emphasize near-field operation, arbitrary surface placements, electromagnetic-domain modeling, and capacity limits \cite{Gong2024HMIMOSurvey,Gong2024ArbitrarySurface}.  CAPA architectures similarly consider electrically large apertures with continuous current distributions and continuous beamforming variables \cite{Liu2025CAPA}.

\subsection{Reconfigurable Intelligent Surfaces and Near-Field Focusing}

RIS and IRS systems use programmable phase or impedance profiles to alter the propagation environment \cite{DiRenzo2020RISRoadAhead,Elmossallamy2020RIS,WuZhang2019IRS}.  RIS optimization often involves unit-modulus or quantized phase constraints, codebook training, and alternating optimization.  In the near field, beam management requires both angular and distance information, motivating focus-aware or range-angle codebooks \cite{Lv2023RISNearField}.  These ideas are included here as matched-focus and RIS-style codebook baselines rather than as direct competitors with identical implementation assumptions.

\subsection{Fourier, Polynomial, and Zernike Wavefront Bases}

Fourier bases are natural for shift-invariant paraxial propagation and plane-wave decompositions \cite{Goodman2017FourierOptics}.  Zernike polynomials and related wavefront modes are standard in optics because they provide orthogonal descriptions of aberrations over circular pupils \cite{BornWolf1999Principles,Noll1976Zernike}.  Polynomial and Zernike-like bases are therefore important optical baselines for any wavefront-coordinate method.  However, orthogonality in an aperture norm does not automatically imply good conditioning after propagation, phase quantization, derivative noise, or finite-grid sampling.

\subsection{Electromagnetic Information Theory}

Electromagnetic information theory seeks physically consistent models that connect Maxwellian wave propagation, antenna physics, channel modeling, and information theory \cite{Zhu2024EIT,Wang2024EIT}.  The present paper is narrower: it does not attempt a full vector electromagnetic theory.  Instead, it provides a scalar, reproducible benchmark in which all modal coordinates are compared under identical assumptions.  The scalar model is a controlled first step; vector, polarization-aware, mutual-coupling-aware, and nonparaxial extensions can be added later through the same interface.

\subsection{Learning-Based Codebooks and Beamformers}
\label{sec:related_learning}

A parallel and rapidly-growing body of work uses machine learning to design phase-only aperture codebooks and beamformers directly. Representative directions include deep-unfolding networks for MIMO precoding \cite{Hu2020DeepUnfolding}, graph neural networks for scalable radio-resource management \cite{Shen2021GNN}, learning-to-optimize deep networks for interference management \cite{Sun2018LearningToOptimize}, and the general deep-learning-for-physical-layer program \cite{OShea2017Physical}. These approaches share the same phase-only manifold and the same physical-layer constraints as our optimized-RIS baseline of Section~\ref{sec:ris_codebooks}, but they trade an offline training budget for a smaller online optimization cost, and their reproducibility hinges on a training-data budget that is not part of the fairness protocol adopted here (Section~\ref{sec:fairness}). Comparing them fairly against curvature-domain, SVD, or optimized-RIS baselines requires an amortized-training-cost accounting that we deliberately defer to a companion paper. Their absence from Figs.~3--10 is a scope choice, not an oversight: we favor a scope in which every baseline reads its channel state once and optimizes once, so that all seven bases in Section~\ref{sec:coordinates} face the same fairness constraints. A learned baseline fits the same harness as a phase-only module alongside the other bases, and is a natural extension.

\section{Common Continuous-Aperture Benchmark Model}

\subsection{Notation}
Symbols in this paper follow the shared notation of the three-paper curvature-domain track. In particular, transmit power is denoted \(P_{\mathrm t}\) (scalar); the propagation operator is written as a matrix \(G\) after weighted quadrature; the curvature-stiffness matrix in Section~\ref{sec:phase_only_control} is written \(L=D_2^{\mathrm T}W^{2}D_2\) (this is the discrete counterpart of the biharmonic penalty used in the theory paper's regularized reconstruction).

\subsection{Apertures and Propagation Operator}

Let the transmit and receive apertures be compact domains
\[
\omt,\omr\subset\R^{2},
\]
embedded in parallel or arbitrarily positioned planes in three-dimensional space.  A transmit aperture coordinate is denoted \(u\in\omt\), and a receive aperture coordinate is denoted \(v\in\omr\).  At wavelength \(\lambda\), wavenumber \(k=2\pi/\lambda\), and aperture separation \(d\), the scalar propagation operator is
\[
(\calT f)(v)=\int_{\omt} K(v,u) f(u)\,\dd u.
\]
The default nonparaxial scalar Green kernel is
\begin{gather*}
K_{\mathrm G}(v,u)=\frac{\exp(-\junit k R(v,u))}{4\pi R(v,u)}, \\
R(v,u)=\|r_{\mathrm r}(v)-r_{\mathrm t}(u)\|_{2},
\end{gather*}
where \(r_{\mathrm t}\) and \(r_{\mathrm r}\) embed the aperture coordinates into \(\R^{3}\).  For paraxial experiments, the Fresnel kernel is
\[
K_{\mathrm F}(v,u)
=
\frac{\exp(\junit k d)}{\junit\lambda d}
\exp\!\left(
\frac{\junit k}{2d}\|v-u\|_{2}^{2}
\right),
\]
up to a common phase convention that is irrelevant for capacity and Euclidean detection metrics.  The benchmark records which kernel is used for each experiment.

\subsection{Weighted Discretization}

Let \(\{u_n,w^{\mathrm t}_n\}_{n=1}^{N_{\mathrm t}}\) and
\(\{v_m,w^{\mathrm r}_m\}_{m=1}^{N_{\mathrm r}}\) be quadrature nodes and positive weights.  Define
\[
W_{\mathrm t}=\diag(w^{\mathrm t}_1,\ldots,w^{\mathrm t}_{N_{\mathrm t}}),
\qquad
W_{\mathrm r}=\diag(w^{\mathrm r}_1,\ldots,w^{\mathrm r}_{N_{\mathrm r}}).
\]
The weighted channel matrix is
\[
G_{mn}
=
\sqrt{w^{\mathrm r}_m}\,
K(v_m,u_n)\,
\sqrt{w^{\mathrm t}_n}.
\]
Thus the discrete weighted receive vector is
\[
y=Gx+n,
\qquad
n\sim\CN(0,N_0 I_{N_{\mathrm r}}).
\]
This convention ensures that Euclidean norms approximate aperture \(L^{2}\) norms.

\subsection{Phase-Only Aperture Control}
\label{sec:phase_only_control}

The phase-only transmit field is
\[
x(\theta)
=
\sqrt{P_{\mathrm t}}\,
W_{\mathrm t}^{1/2}
a\odot
\exp\!\bigl(\junit(\theta_{0}+\theta)\bigr),
\]
where \(a\in\R_{+}^{N_{\mathrm t}}\) is a fixed amplitude taper normalized by
\[
\|W_{\mathrm t}^{1/2}a\|_{2}^{2}=1,
\]
\(\theta_{0}\) is a shared carrier, focusing, or steering phase, and \(\theta\) is the data-bearing phase perturbation.  After subtracting the known carrier response \(Gx(0)\), the received perturbation is
\[
z(\theta)=G(x(\theta)-x(0))+n.
\]
For small to moderate phase perturbations,
\[
z(\theta)
=
H_{\mathrm{lin}}\theta+n+r(\theta),
\]
where
\[
H_{\mathrm{lin}}
=
\junit\sqrt{P_{\mathrm t}}\,
G
\diag\!\left(
W_{\mathrm t}^{1/2}a\odot \exp(\junit\theta_{0})
\right).
\]
The residual \(r(\theta)\) satisfies the deterministic bound
\[
\|r(\theta)\|_{2}
\le
\frac{\sqrt{P_{\mathrm t}}}{2}
\exp(\|\theta\|_{\infty})
\|G\|_{2}
\left\|
W_{\mathrm t}^{1/2}a\odot \theta^{\odot 2}
\right\|_{2}.
\]
All modal capacity and PEP comparisons in this paper use the same linearized tangent operator \(H_{\mathrm{lin}}\), and nonlinear Monte Carlo checks use the exact phase-only map \(Gx(\theta)\).

\begin{remark}
The tangent model is not a replacement for electromagnetic validation.  It is a controlled benchmark layer.  The nonlinear phase-only channel is retained for codebook SER checks, while the tangent channel gives a fair linear Gaussian comparison among bases.
\end{remark}

\subsection{Gauge Fixing and Curvature Coordinates}

Let the sampled affine gauge subspace be
\[
\mathcal G
=
\spanop\{ \mathbf{1},u_x,u_y\},
\]
where \(u_x,u_y\) are the sampled aperture coordinates.  Let \(P_{\mathcal G}^{\perp}\) be the weighted orthogonal projector onto the complement of \(\mathcal G\).  The benchmark uses
\[
\theta\leftarrow P_{\mathcal G}^{\perp}\theta
\]
for data-bearing curvature comparisons unless a baseline explicitly uses a common steering or focusing phase in \(\theta_{0}\).  This prevents piston and coordinate-reference choices from being counted as curvature information.

In the continuum, the curvature map is the Hessian
\[
\calC\theta = D^{2}\theta,
\]
and the curvature energy is
\[
E_{\mathrm c}(\theta)
=
\int_{\omt}\|D^{2}\theta(u)\|_{\mathrm F}^{2}\,\dd u.
\]
In the discrete implementation, \(D_{2}\) denotes a finite-difference or finite-element Hessian operator and
\[
L=D_{2}^{\Herm}W_{2}D_{2}
\]
is the curvature stiffness matrix.  A small regularization \(\rho>0\) yields the strictly positive metric
\[
M_{\rho}=L+\rho W_{\mathrm t}.
\]

\subsection{Physical Budgets}

All baselines are evaluated under common budgets:

\begin{align}
\|\theta\|_{\infty} &\le \theta_{\max},\\
\|\theta\|_{W_{\mathrm t}}^{2}
=
\theta^{\Herm}W_{\mathrm t}\theta
&\le E_{\phi},\\
\theta^{\Herm}L\theta
&\le E_{\mathrm c},\\
\E\|x(\theta)\|_{2}^{2}
&=P_{\mathrm t}.
\end{align}

When a method does not naturally use curvature energy, it is normalized to the same phase-energy or peak-phase budget.  When curvature constraints are active, the benchmark reports both phase-energy-normalized and curvature-energy-normalized results.

\subsection{Consistency with the Companion Theory Paper}
\label{sec:dual_budget_consistency}

The two constraints \(\theta^{\Herm}W_{\mathrm t}\theta\le E_{\phi}\) and \(\theta^{\Herm}L\theta\le E_{\mathrm c}\) are the finite-dimensional counterparts of the two operator-theoretic budgets in the companion theory paper: the curvature-power budget \(\tr(K_{\mathrm a})\le P_{\mathrm c}\) and the phase-excursion budget \(\tr(G_{M}K_{\mathrm a})\le P_{\phi}\), where \(G_{M}=\mathrm{diag}(\rho_{1},\ldots,\rho_{M})\) is the diagonal phase-realizability penalty in the curvature-eigenbasis. Under the correspondence
\begin{equation}
E_{\mathrm c}\longleftrightarrow P_{\mathrm c},
\qquad
E_{\phi}\longleftrightarrow P_{\phi},
\qquad
L\longleftrightarrow B_{K}G_{M}B_{K}^{\Herm},
\label{eq:dual_budget_correspondence}
\end{equation}
the finite-mode curvature capacity reported below coincides with the dual-multiplier generalized water-filling law of the theory paper --- specifically, the per-mode allocation
\begin{equation}
p_{m}^{\star}
=
\bracks{\frac{1}{(\mu+\nu\rho_{m})\ln 2}-\frac{1}{\lambda_{m}}}_{+}
\label{eq:dualwf}
\end{equation}
with Lagrange multipliers \((\mu,\nu)\) fixed by the two budgets \(P_{\mathrm c},P_{\phi}\) via complementary slackness. In the flat limit \(\rho_{m}\to 1\) (uniform phase-realizability weights), \eqref{eq:dualwf} reduces to standard single-budget Gaussian water-filling with \(\mu+\nu\) as the effective water level; the second multiplier \(\nu\) is thus the geometrically meaningful novelty. The benchmark harness enforces \eqref{eq:dual_budget_correspondence} whenever the curvature-regularized eigenbasis is used, so the reported curves are cross-consistent with the theory paper at every SNR value in the shared operating regime.

\section{Compared Signaling Coordinates}
\label{sec:coordinates}

Each basis or codebook is a coordinate map from a finite-dimensional coefficient vector or symbol index to a phase profile \(\theta\).  Linear bases are represented by a matrix
\[
B_{K}=[b_{1},\ldots,b_{K}]\in\C^{N_{\mathrm t}\times K},
\]
with gauge-fixed, real-valued phase patterns obtained by taking real and imaginary Fourier components when needed.  Unless stated otherwise, columns are normalized so that
\[
B_{K}^{\Herm}W_{\mathrm t}B_{K}=I_{K}.
\]
The corresponding modal channel is
\[
H_{B}=H_{\mathrm{lin}}B_{K}.
\]

\subsection{Curvature-Regularized Eigenbasis}

The curvature-domain basis is defined by the generalized eigenproblem
\[
H_{\mathrm{lin}}^{\Herm}H_{\mathrm{lin}}q_m
=
\mu_m
M_{\rho}q_m,
\qquad
q_m^{\Herm}M_{\rho}q_n=\delta_{mn}.
\]
The eigenvalues are ordered
\[
\mu_1\ge \mu_2\ge\cdots\ge0.
\]
The benchmark phase patterns are the gauge-fixed and phase-normalized versions of \(q_m\):
\[
b_m
=
\frac{P_{\mathcal G}^{\perp}q_m}
{\|P_{\mathcal G}^{\perp}q_m\|_{W_{\mathrm t}}}.
\]
This construction is channel-aware but physically constrained: modes with high received gain are penalized if they require excessive curvature energy.

\begin{proposition}[Variational characterization]
The first curvature-regularized mode solves
\[
q_1
=
\argmax_{q\ne0}
\frac{\|H_{\mathrm{lin}}q\|_{2}^{2}}
{q^{\Herm}M_{\rho}q}.
\]
The \(m\)-th mode solves the same maximization over the \(M_{\rho}\)-orthogonal complement of \(\spanop\{q_1,\ldots,q_{m-1}\}\).
\end{proposition}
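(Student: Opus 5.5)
The plan is to reduce the generalized Rayleigh quotient to an ordinary one by a congruence transformation, and then invoke the Courant--Fischer (Rayleigh--Ritz) characterization for Hermitian matrices.

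\emph{Step 1: $M_\rho$ has a square root.} Set $A=H_{\mathrm{lin}}^{\Herm}H_{\mathrm{lin}}$, which is Hermitian positive semidefinite. Since $L=D_2^{\Herm}W_2D_2\succeq0$, $W_{\mathrm t}\succ0$ and $\rho>0$, the metric $M_\rho=L+\rho W_{\mathrm t}$ is Hermitian positive definite. Hence $M_\rho^{1/2}$ and $M_\rho^{-1/2}$ exist.

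\emph{Step 2: change variables.} Substitute $p=M_\rho^{1/2}q$. The quotient becomes
\[
\frac{\|H_{\mathrm{lin}}q\|_2^2}{q^{\Herm}M_\rho q}=\frac{p^{\Herm}\tilde A p}{p^{\Herm}p},
\qquad
\tilde A=M_\rho^{-1/2}AM_\rho^{-1/2}\succeq0 .
\]
Multiplying the generalized eigenproblem $Aq_m=\mu_mM_\rho q_m$ on the left by $M_\rho^{-1/2}$ gives $\tilde Ap_m=\mu_mp_m$ with $p_m=M_\rho^{1/2}q_m$. The normalization $q_m^{\Herm}M_\rho q_n=\delta_{mn}$ is then exactly $p_m^{\Herm}p_n=\delta_{mn}$. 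So the generalized eigenpairs correspond bijectively to an orthonormal eigen-decomposition of $\tilde A$, with the same ordered eigenvalues $\mu_1\ge\mu_2\ge\cdots\ge0$.

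\emph{Step 3: first mode.} By the spectral theorem, expand $p=\sum_m c_mp_m$. Then
\[
\frac{p^{\Herm}\tilde A p}{p^{\Herm}p}=\frac{\sum_m\mu_m|c_m|^2}{\sum_m|c_m|^2}\le\mu_1,
\]
with equality at $p=p_1$. Mapping back through $q=M_\rho^{-1/2}p$ gives $q_1$ as a maximizer.

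\emph{Step 4: $m$-th mode.} Note that $q^{\Herm}M_\rho q_j=p^{\Herm}p_j$. So the $M_\rho$-orthogonal complement of $\spanop\{q_1,\dots,q_{m-1}\}$ maps onto the Euclidean complement of $\spanop\{p_1,\dots,p_{m-1}\}$. On that subspace only the coefficients $c_j$ with $j\ge m$ survive, so the quotient is bounded by $\mu_m$ and attained at $p_m$, i.e., at $q_m$.

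\emph{Caveats.} The main point needing care is interpretation rather than calculation. When eigenvalues are repeated, the argmax is a set, and $q_m$ is one of its elements; the statement should be read that way. One should also note that the gauge projection $P_{\mathcal G}^{\perp}$ and the renormalization defining $b_m$ come after this characterization. The proposition concerns $q_m$, not $b_m$, so no claim is needed that the projected patterns remain maximizers.
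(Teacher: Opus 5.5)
Your proof is correct and follows the same route as the paper, which simply invokes the Rayleigh--Ritz characterization for the Hermitian-definite pencil $(H_{\mathrm{lin}}^{\Herm}H_{\mathrm{lin}},M_{\rho})$ with $M_{\rho}\succ0$. Your congruence $p=M_{\rho}^{1/2}q$ and spectral expansion are the standard proof of that result written out explicitly, and your caveat is the right way to read the statement: the argmax is a set (nonzero rescalings of $q_m$, and more than that when eigenvalues repeat).
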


\begin{IEEEproof}
This is the Rayleigh--Ritz characterization of the Hermitian generalized eigenproblem
\(H_{\mathrm{lin}}^{\Herm}H_{\mathrm{lin}}q=\mu M_{\rho}q\), with \(M_{\rho}\succ0\).
\end{IEEEproof}

\subsection{Raw Phase Coefficients}

The raw phase baseline uses localized phase samples or local hat functions.  In its simplest version,
\[
B_{\mathrm{raw}}=W_{\mathrm t}^{-1/2}I
\]
followed by gauge projection and normalization.  This basis has maximal local control but typically poor derivative conditioning, because neighboring pixels can oscillate with small phase energy but large curvature energy.

\subsection{Fourier and DFT Phase Modes}

For rectangular apertures, Fourier modes are sampled as
\[
\psi_{\ell_x,\ell_y}(u)
=
\exp\!\left(
\junit 2\pi
\left(
\frac{\ell_x u_x}{L_x}
+
\frac{\ell_y u_y}{L_y}
\right)
\right).
\]
The benchmark uses real sine and cosine components, sorted by spatial frequency radius, then gauge projected and weighted-QR orthonormalized.  Fourier modes are expected to be strong baselines for shift-invariant paraxial channels and far-field angular spectra.

\subsection{Polynomial and Zernike-Like Modes}

For rectangular apertures, tensor-product Legendre modes are used:
\[
\psi_{p,q}(u_x,u_y)
=
P_p(\xi_x)P_q(\xi_y),
\]
where \(\xi_x,\xi_y\in[-1,1]\).  For circular or masked apertures, Zernike-like modes are sampled over the unit disk and then weighted-QR orthonormalized on the actual aperture.  These modes represent optical wavefront baselines and are sorted by total degree.

\subsection{Matched-Focus and Beamfocusing Profiles}

For a candidate focal point \(r_f\), the matched-focus phase is
\[
\theta_f(u)
=
-k\|r_f-r_{\mathrm t}(u)\|_2
\quad \mathrm{mod}\ 2\pi,
\]
followed by removal of the common carrier phase already included in \(\theta_0\), gauge projection if appropriate, and peak-phase normalization.  A codebook is built by sampling range-angle focal points.  This baseline is intentionally strong in sparse near-field point-to-point settings.

\subsection{RIS-Style Codebooks: Random, Greedy, and Optimized}
\label{sec:ris_codebooks}

Three RIS-style baselines are shipped rather than one. They are separated
because they are different objects and are easily conflated.

\paragraph*{Random.} Phase profiles
\(\theta_m(n)\sim\mathrm{Unif}[-\theta_{\max},\theta_{\max}]\), or their
\(b\)-bit quantized versions. No channel state, no training, no
propagation-operator knowledge. It is the lower reference.

\paragraph*{Greedy (farthest-point).} A farthest-point traversal
\cite{Gonzalez1985} over a candidate pool of matched-focus profiles and
Gaussian random profiles, selecting at each step the candidate whose received
image is farthest from those already chosen. It is cheap, deterministic given
the pool, and channel-aware only through the received-space distances it
evaluates. It selects a \emph{subspace}, so it also has a basis-level form and
is the RIS entry in the capacity and conditioning tables.

\paragraph*{Optimized (Algorithm~\ref{alg:optimized_ris}).} Projected gradient
ascent on the minimum pairwise received-space distance, subject to the affine
gauge, the peak-phase clip, and \(b\)-bit quantization, with the projection
applied at every step. Because the tangent channel is linear in the phase
profile, the gradient of the annealed softmin surrogate is available in closed
form,
\begin{equation}
\nabla_{\Theta}\,J_T \;=\; 2\,\bigl(\mathrm{diag}(W\mathbf 1)-W\bigr)\,\Theta\,A,
\qquad A=\mathrm{Re}\{H_{\mathrm{lin}}^{\Herm}H_{\mathrm{lin}}\},
\label{eq:pga_gradient}
\end{equation}
where \(\Theta\in\R^{M\times N_{\mathrm t}}\) stacks the codeword profiles and
\(W\) holds the softmin weights \(w_{ij}\propto e^{-d_{ij}^{2}/T}\). No
automatic differentiation is required. Unlike the greedy, this optimizes a
\emph{packing} and has no basis-level analogue; it appears only in the
codebook contests of Figs.~5--7.

\begin{algorithm}[t]
\caption{Optimized RIS-style codebook via projected gradient ascent.}
\label{alg:optimized_ris}
\begin{algorithmic}[1]
\Require{Codebook size $M$; peak phase $\theta_{\max}$; quantization bits $b$; step $\eta_0$; iterations $T_{\max}$; restarts $R$; optional seed codebooks $\mathcal{S}$; linearized channel $H_{\mathrm{lin}}$}
\State $A\gets\mathrm{Re}\{H_{\mathrm{lin}}^{\Herm}H_{\mathrm{lin}}\}$
\State Starting points $\gets\mathcal{S}$, padded to $R$ entries with $\mathrm{Unif}[-\theta_{\max},\theta_{\max}]^{M\times N_{\mathrm t}}$ draws
\ForAll{starting points $\Theta^{(0)}$}
  \State $\Theta\gets\Pi(\Theta^{(0)})$; \ $\eta\gets\eta_0$; \ $\Theta_{\mathrm{best}}\gets\Theta$; \ $J_{\mathrm{best}}\gets d_{\min}(\Theta)$
  \For{$t=0,\ldots,T_{\max}-1$}
    \State $d^2_{ij}\gets\|H_{\mathrm{lin}}(\theta_i-\theta_j)\|_2^{2}$ for all $i\neq j$
    \State $T\gets\bigl(\overline{d^{2}}-\min d^{2}\bigr)\cdot 0.02^{\,t/(T_{\max}-1)}$ \Comment{anneal}
    \State $w_{ij}\gets e^{-(d^2_{ij}-\min d^{2})/T}$, normalized to sum to one
    \State $\nabla\gets 2(\mathrm{diag}(W\mathbf 1)-W)\,\Theta\,A$
    \State $\Theta'\gets\Pi\bigl(\Theta+\eta\,\theta_{\max}\,\nabla/\|\nabla\|_F\bigr)$
    \If{$d_{\min}(\Theta')>J_{\mathrm{best}}$}
      \State $\Theta_{\mathrm{best}}\gets\Theta'$; \ $J_{\mathrm{best}}\gets d_{\min}(\Theta')$
    \Else
      \State $\eta\gets\eta/2$ \Comment{the feasible set is a lattice; a fixed step sticks or oscillates}
    \EndIf
    \State $\Theta\gets\Theta'$
  \EndFor
\EndFor
\State \Return the $\Theta_{\mathrm{best}}$ with the largest $d_{\min}$ over all starting points
\end{algorithmic}
\end{algorithm}

The projection \(\Pi\) applies, in order: orthogonal projection of each codeword
onto the affine-gauge complement; a global rescaling so that
\(\max_{m,n}|\theta_m(n)|=\theta_{\max}\); and \(b\)-bit uniform quantization to
the grid
\(\bigl\{-\theta_{\max}+k\cdot 2\theta_{\max}/(2^{b}-1)\bigr\}_{k=0}^{2^{b}-1}\).
The order matters and is stated because gauge projection and quantization do
not commute.

Two implementation points are load-bearing rather than incidental, and both were
added after a first version of this optimizer was found to lose to the random
codebook. The step is halved on any non-improving iteration: the feasible set is
a lattice, so a fixed step either sticks between grid points or oscillates
across them. And the best iterate is retained by its \emph{true} minimum
distance rather than by the softmin surrogate, so an overshoot cannot discard a
good point.

\paragraph*{Seeded and unseeded runs are both reported.} Projected ascent from
an incumbent can only improve on it, so a seeded run answers ``what is the
strongest codebook available'' while an unseeded run answers ``what does an
independent optimizer find''. Reporting only the first would flatter whichever
basis was used as the seed; reporting only the second would understate the
baseline that curvature must actually beat. Both appear in
Table~\ref{tab:operating_point}, and the fair comparison for curvature is
against the seeded run.

\subsection{SVD Upper Bounds}

The phase-space SVD upper bound is the SVD of
\[
\widetilde H_{\phi}=H_{\mathrm{lin}}W_{\mathrm t}^{-1/2}.
\]
If
\[
\widetilde H_{\phi}=U\Sigma V^{\Herm},
\]
then water filling over the singular values \(\{\sigma_i\}\) gives
\[
C_{\mathrm{SVD},\phi}(P)
=
\sum_{i}
\log_2\left(
1+\frac{p_i\sigma_i^{2}}{N_0}
\right),
\]
where
\[
p_i=\left(\nu-\frac{N_0}{\sigma_i^{2}}\right)_{+},
\qquad
\sum_i p_i=P.
\]
This is the upper bound for any phase-coordinate basis with the same discretized phase-energy budget.

The relaxed complex-field bound uses the SVD of \(G\) under arbitrary complex aperture fields.  It is an outer bound because arbitrary complex amplitude-and-phase excitation contains the phase-only tangent space as a restricted subset.

\section{Fairness Protocol}
\label{sec:fairness}

A benchmark is meaningful only if all methods share the same physical and statistical assumptions.  The following protocol is enforced.

\begin{enumerate}[leftmargin=*]
\item \textbf{Same aperture:} all methods use the same \(\omt,\omr\), masks, aperture separation, carrier wavelength, and amplitude taper.
\item \textbf{Same propagation:} all methods use the same kernel \(K\), either Fresnel or scalar Green, and the same quadrature rule.
\item \textbf{Same power:} all methods are normalized to the same transmit power \(P_{\mathrm t}\) and coefficient power budget.
\item \textbf{Same phase budget:} all phase profiles obey the same peak phase \(\theta_{\max}\) and, where relevant, the same phase energy \(E_{\phi}\).
\item \textbf{Same curvature budget:} curvature-constrained sweeps use the same \(E_{\mathrm c}\) and the same discrete Hessian operator \(D_2\).
\item \textbf{Same noise:} all receive samples use \(n\sim\CN(0,N_0I)\), with SNR defined by the same received or transmit-power convention.
\item \textbf{Same mode count:} comparisons versus retained modes use the same \(K\), and all bases are truncated only after sorting by their prescribed ordering.
\item \textbf{Same quantization:} phase quantization uses the same number of bits and the same wrapping convention.
\item \textbf{Same training assumption:} channel-aware methods, including SVD, optimized RIS codebooks, and curvature eigenmodes, are evaluated with the same channel knowledge assumption.  When training overhead is modeled, it is charged consistently.
\item \textbf{Same random seeds:} random codebooks, Monte Carlo symbol trials, and phase-noise realizations are generated from recorded seeds.
\end{enumerate}

If any method violates a budget after projection, it is rescaled before evaluation.  If rescaling is impossible, the configuration is marked infeasible rather than silently excluded.

\section{Capacity, Detection, and Stability Metrics}

\subsection{Constrained Modal Capacity}

For a basis \(B_K\), the linear modal channel is \(H_B=H_{\mathrm{lin}}B_K\).  Under coefficient covariance \(Q\succeq0\) and \(\tr(Q)\le P_{\mathrm c}\), the modal capacity is
\[
C_B(P_{\mathrm c})
=
\max_{Q\succeq0,\ \tr(Q)\le P_{\mathrm c}}
\log_2\det
\left(
I+
\frac{1}{N_0}
H_BQH_B^{\Herm}
\right).
\]
When \(B_K\) is \(W_{\mathrm t}\)-orthonormal, this is the standard water-filling solution over the singular values of \(H_B\).

\begin{theorem}[Phase-space SVD upper bound]
\label{thm:svd_bound}
Let \(B_K\) be any \(W_{\mathrm t}\)-orthonormal phase basis.  Then
\[
C_B(P_{\mathrm c})
\le
C_{\mathrm{SVD},\phi}(P_{\mathrm c}),
\]
where \(C_{\mathrm{SVD},\phi}\) is obtained by water filling over the singular values of \(H_{\mathrm{lin}}W_{\mathrm t}^{-1/2}\).
\end{theorem}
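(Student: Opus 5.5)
The plan is to rewrite the modal channel in whitened phase coordinates, so that $H_B$ becomes $\widetilde H_{\phi}$ multiplied by a matrix with orthonormal columns, and then to compare singular values. Set $V_K = W_{\mathrm t}^{1/2}B_K \in \C^{N_{\mathrm t}\times K}$. The normalization $B_K^{\Herm}W_{\mathrm t}B_K=I_K$ says exactly that $V_K^{\Herm}V_K=I_K$, so $V_K$ is an isometry. By definition of $\widetilde H_{\phi}=H_{\mathrm{lin}}W_{\mathrm t}^{-1/2}$ we have
\[
H_B=H_{\mathrm{lin}}B_K=\widetilde H_{\phi}V_K .
\]
The capacity $C_B(P_{\mathrm c})$ is therefore the capacity of the channel $\widetilde H_{\phi}$ with inputs restricted to the range of $V_K$.

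The first route is a direct inclusion of feasible sets. For any feasible $Q\succeq0$ with $\tr(Q)\le P_{\mathrm c}$, define $\widetilde Q=V_KQV_K^{\Herm}\succeq0$. Then
\[
\tr(\widetilde Q)=\tr(QV_K^{\Herm}V_K)=\tr(Q)\le P_{\mathrm c},
\qquad
H_BQH_B^{\Herm}=\widetilde H_{\phi}\widetilde Q\widetilde H_{\phi}^{\Herm}.
\]
So every objective value attainable in the modal problem is also attainable in the unrestricted problem
\[
\max_{\widetilde Q\succeq0,\ \tr(\widetilde Q)\le P_{\mathrm c}}\log_2\det\!\left(I+N_0^{-1}\widetilde H_{\phi}\widetilde Q\widetilde H_{\phi}^{\Herm}\right).
\]
Taking maxima gives $C_B(P_{\mathrm c})$ bounded by the value of this problem.

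The remaining step is to identify that unrestricted maximum with $C_{\mathrm{SVD},\phi}(P_{\mathrm c})$. This is the classical result that Gaussian MIMO capacity under a trace constraint is attained by SVD precoding with water filling. To prove it, substitute the SVD $\widetilde H_{\phi}=U\Sigma V^{\Herm}$ and use unitary invariance of the determinant to reduce to $\log_2\det(I+N_0^{-1}\Sigma S\Sigma^{\Herm})$ with $S=V^{\Herm}\widetilde Q V$. Hadamard's inequality then shows the determinant is maximized, for fixed diagonal of $S$, by a diagonal $S$. The resulting concave scalar problem has the KKT solution $p_i=(\nu-N_0/\sigma_i^2)_+$, which is exactly the stated formula.

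As an optional cross-check, the bound also follows from Cauchy interlacing. The nonzero singular values of $\widetilde H_{\phi}V_K$ are dominated termwise, $\sigma_i(H_B)\le\sigma_i(\widetilde H_{\phi})$. Water-filling capacity is monotone in each gain and also increases when more modes are available, since extra modes can be assigned zero power.

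I expect the main obstacle to be bookkeeping rather than substance. One must check that the paper's modal capacity uses the same trace normalization and the same $N_0$ as the SVD bound, so that the power budgets match. One must also check that the gauge projection built into $B_K$ only shrinks the range of $V_K$, which cannot hurt the inclusion argument. Finally, if $B_K$ is real while $H_{\mathrm{lin}}$ is complex, the inclusion argument still goes through unchanged: it only needs $V_K^{\Herm}V_K=I_K$, and a complex covariance $\widetilde Q$ is allowed in the outer problem.
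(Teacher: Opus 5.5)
Your proof is correct, but your main argument takes a different route from the paper's. You and the paper share the first step: both use the isometry $W_{\mathrm t}^{1/2}B_K$ and the factorization $H_B=\widetilde H_{\phi}W_{\mathrm t}^{1/2}B_K$.

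The paper then argues spectrally. By Courant--Fischer and interlacing, the singular values of this compression are dominated componentwise by the leading singular values of $\widetilde H_{\phi}$. Monotonicity of the water-filling value in the squared gains then finishes the proof. This is exactly your optional cross-check. It implicitly uses the classical fact, stated just before the theorem, that $C_B$ equals water filling over the singular values of $H_B$.

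Your primary route embeds the feasible set instead. The lift $Q\mapsto V_KQV_K^{\Herm}$ preserves both the trace and $H_BQH_B^{\Herm}$, so $C_B$ is at most the unrestricted Gaussian capacity of $\widetilde H_{\phi}$. You then identify that capacity with $C_{\mathrm{SVD},\phi}$ via Hadamard's inequality and the KKT conditions. This route has three advantages:
\begin{itemize}
\item it needs no singular-value comparison and no separate monotonicity lemma;
\item it proves the SVD optimality it relies on, rather than citing it;
\item it carries over verbatim to any objective that depends only on $HQH^{\Herm}$ and any budget preserved by the isometric lift.
\end{itemize}

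What the paper's spectral route gives in return is a mode-by-mode statement. Since $\sigma_i(H_B)\le\sigma_i(\widetilde H_{\phi})$ for $i\le K$, a $K$-mode basis is bounded by water filling over only the top $K$ singular values. In other words, the truncated SVD is the best $K$-dimensional phase basis. That is the natural reference when bases are compared at a fixed number of retained modes. Your inclusion argument alone does not yield it: you would have to add a rank-$K$ constraint to the outer problem and then redo the interlacing or majorization step anyway.
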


\begin{IEEEproof}
Let \(U_B=W_{\mathrm t}^{1/2}B_K\).  Then \(U_B^{\Herm}U_B=I_K\) and
\[
H_B
=
H_{\mathrm{lin}}W_{\mathrm t}^{-1/2}U_B.
\]
Thus \(H_B\) is the restriction of the full weighted phase-space channel to a \(K\)-dimensional right subspace.  By the Courant--Fischer and interlacing principles for singular values, the singular values of this restriction are componentwise upper-bounded by the leading singular values of \(H_{\mathrm{lin}}W_{\mathrm t}^{-1/2}\).  Since the Gaussian water-filling objective is monotone in the squared singular values, the capacity of any restricted basis is no larger than the water-filled SVD capacity of the full phase-space channel.
\end{IEEEproof}

\begin{remark}
If a plotted curvature curve exceeds the phase-space SVD upper bound under identical constraints, the benchmark implementation is incorrect.  The SVD curve is not a competitor to be beaten; it is a consistency check and upper bound.
\end{remark}

\subsection{Pairwise Error Probability}

For a finite phase codebook
\[
\Theta=\{\theta_1,\ldots,\theta_M\},
\]
the linearized received codeword is \(s_i=H_{\mathrm{lin}}\theta_i\).  Under ML detection in AWGN, the pairwise error probability satisfies
\[
P(i\rightarrow j)
=
Q\!\left(
\frac{\|s_i-s_j\|_2}{\sqrt{2N_0}}
\right),
\]
where
\[
Q(x)=\frac{1}{\sqrt{2\pi}}\int_x^{\infty}\exp(-t^2/2)\,\dd t.
\]
The union bound is
\[
P_{\mathrm{SER}}
\le
\frac{1}{M}
\sum_{i=1}^{M}
\sum_{j\ne i}
Q\!\left(
\frac{\|H_{\mathrm{lin}}(\theta_i-\theta_j)\|_2}{\sqrt{2N_0}}
\right).
\]
Monte Carlo SER is also reported using both the linearized channel and the exact nonlinear phase-only map.

\subsection{Phase Noise and Derivative Noise}

Phase sampling noise is modeled as
\[
\widehat\theta=\theta+\xi,
\qquad
\xi\sim\calN(0,\sigma_{\phi}^{2}I)
\]
or with spatial correlation when specified.  Curvature estimation applies \(D_2\), giving
\[
D_2\widehat\theta=D_2\theta+D_2\xi.
\]
Since \(D_2\) amplifies high spatial frequencies, raw differentiation can be ill conditioned.  Regularized curvature reconstruction uses operators of the form
\[
R_{\rho}=(L+\rho W_{\mathrm t})^{-1}D_2^{\Herm}W_2,
\]
whose norm is controlled by \(\rho\).

\begin{proposition}[Regularized derivative-noise bound]
For \(L\succeq0\), \(W_{\mathrm t}\succ0\), and \(\rho>0\),
\[
\|(L+\rho W_{\mathrm t})^{-1/2}\|_2
\le
\frac{1}{\sqrt{\rho\,\lambda_{\min}(W_{\mathrm t})}}.
\]
Consequently, curvature reconstruction with \(\rho>0\) has finite worst-case noise gain, while unregularized inversion can be singular on or near the gauge and high-frequency nullspaces.
\end{proposition}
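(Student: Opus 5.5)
The plan is to prove the operator-norm bound by a Loewner-order comparison, and then to read the "consequently" clause off it. Throughout, $L$ and $W_{\mathrm t}$ are Hermitian. $L=D_2^{\Herm}W_2D_2$ is positive semidefinite because $W_2$ is a nonnegative weight matrix, and $W_{\mathrm t}$ is a positive diagonal matrix of quadrature weights. Write $M_\rho=L+\rho W_{\mathrm t}$ as in the gauge-fixing subsection.

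First, from $L\succeq0$ we get $M_\rho\succeq\rho W_{\mathrm t}$. Next, $W_{\mathrm t}\succeq\lambda_{\min}(W_{\mathrm t})I$ gives
\[
M_\rho\succeq\rho\,\lambda_{\min}(W_{\mathrm t})\,I\succ0.
\]
In particular $M_\rho$ is invertible, and its Hermitian square root $M_\rho^{1/2}$ and the inverse $M_\rho^{-1/2}$ are well defined. Since $M_\rho^{-1/2}$ is Hermitian positive definite, its spectral norm is the largest eigenvalue of $M_\rho^{-1/2}$. That equals $\lambda_{\min}(M_\rho)^{-1/2}$.

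By Weyl's inequality, or directly from the Rayleigh quotient $q^{\Herm}M_\rho q\ge\rho\lambda_{\min}(W_{\mathrm t})\|q\|_2^2$, we have $\lambda_{\min}(M_\rho)\ge\rho\lambda_{\min}(W_{\mathrm t})$. Substituting gives the stated bound
\[
\|M_\rho^{-1/2}\|_2\le\bigl(\rho\lambda_{\min}(W_{\mathrm t})\bigr)^{-1/2}.
\]
The only point to check is that $x\mapsto x^{-1/2}$ is monotone on the positive reals, so the eigenvalue lower bound transfers to the inverse square root. This avoids any appeal to operator monotonicity.

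For the consequence, I would bound the reconstruction operator $R_\rho=M_\rho^{-1}D_2^{\Herm}W_2$ by factoring it as
\[
R_\rho=M_\rho^{-1/2}\bigl(M_\rho^{-1/2}D_2^{\Herm}W_2^{1/2}\bigr)W_2^{1/2}.
\]
The middle factor $Z=M_\rho^{-1/2}D_2^{\Herm}W_2^{1/2}$ satisfies
\[
ZZ^{\Herm}=M_\rho^{-1/2}LM_\rho^{-1/2}\preceq M_\rho^{-1/2}M_\rho M_\rho^{-1/2}=I,
\]
so $\|Z\|_2\le1$. Combining the three factors yields
\[
\|R_\rho\|_2\le\|W_2^{1/2}\|_2\,/\sqrt{\rho\lambda_{\min}(W_{\mathrm t})}.
\]
This is finite for every $\rho>0$, so the worst-case gain from sampling noise $\xi$ to the reconstructed curvature is finite.

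For the contrasting statement, I note that $L$ annihilates the affine gauge $\mathcal G$: a finite-difference Hessian kills sampled affine functions, so $L\mathbf 1=Lu_x=Lu_y=0$. Hence $L$ is singular, and the bound $\lambda_{\min}(M_\rho)^{-1/2}$ blows up as $\rho\to0$. The main obstacle, such as it is, lies in this last remark. It is qualitative ("near the high-frequency nullspaces"), so I would state it precisely as singularity on $\mathcal G$, together with small eigenvalues of $L$ that make the unregularized pseudo-inverse ill conditioned. I would not try to prove a sharp quantitative claim there.
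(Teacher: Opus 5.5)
Your proof of the norm bound is correct and follows the paper's argument: the Loewner chain \(L+\rho W_{\mathrm t}\succeq\rho W_{\mathrm t}\succeq\rho\lambda_{\min}(W_{\mathrm t})I\), then \(\|M_\rho^{-1/2}\|_2=\lambda_{\min}(M_\rho)^{-1/2}\). Your extra factorization, which uses \(ZZ^{\Herm}\preceq I\) to give \(\|R_\rho\|_2\le\|W_2^{1/2}\|_2/\sqrt{\rho\lambda_{\min}(W_{\mathrm t})}\), is also correct; it makes quantitative the ``consequently'' clause, which the paper leaves as an unproved remark.
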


\begin{IEEEproof}
Because \(L\succeq0\),
\[
L+\rho W_{\mathrm t}\succeq \rho W_{\mathrm t}
\succeq \rho\lambda_{\min}(W_{\mathrm t})I.
\]
The stated inverse-square-root bound follows immediately.
\end{IEEEproof}

\subsection{Quantization}

For \(b\)-bit uniform phase quantization over \([-\theta_{\max},\theta_{\max}]\), the step size is
\[
\Delta_b=\frac{2\theta_{\max}}{2^b-1}.
\]
If \(\theta_q\) is the quantized phase vector, then componentwise rounding gives
\[
\|\theta-\theta_q\|_{\infty}\le\frac{\Delta_b}{2}
\]
and therefore
\[
\|H_{\mathrm{lin}}(\theta-\theta_q)\|_2
\le
\frac{\Delta_b}{2}
\sqrt{N_{\mathrm t}}
\|H_{\mathrm{lin}}\|_2.
\]
This bound is intentionally conservative; the benchmark reports empirical rate and SER degradation as a function of \(b\).

\subsection{Conditioning and Computational Cost}

For a basis \(B_K\), the benchmark reports:

\begin{align}
\kappa_{\mathrm{phase}}(B_K)
&=
\cond(B_K^{\Herm}W_{\mathrm t}B_K),\\
\kappa_{\mathrm{chan}}(B_K)
&=
\cond(H_B^{\Herm}H_B),\\
\kappa_{\mathrm{curv}}(B_K)
&=
\cond(B_K^{\Herm}LB_K+\rho I).
\end{align}

The computational cost is recorded separately for basis construction, channel projection, water filling, codebook optimization, and Monte Carlo detection.  SVD has the strongest optimality guarantee but typically the highest global channel-decomposition cost.

\section{Reproducible Experiment Suite}

The benchmark suite generates the following figures and machine-readable result files.

\subsection{Fig. 1: Geometry and Constraints}

Fig. \ref{fig:geometry} shows the common aperture geometry, transmit and receive grids, wavelength, separation, phase-only aperture, and equal-budget constraints.

\begin{figure}[!t]
\centering
\includegraphics[width=\linewidth]{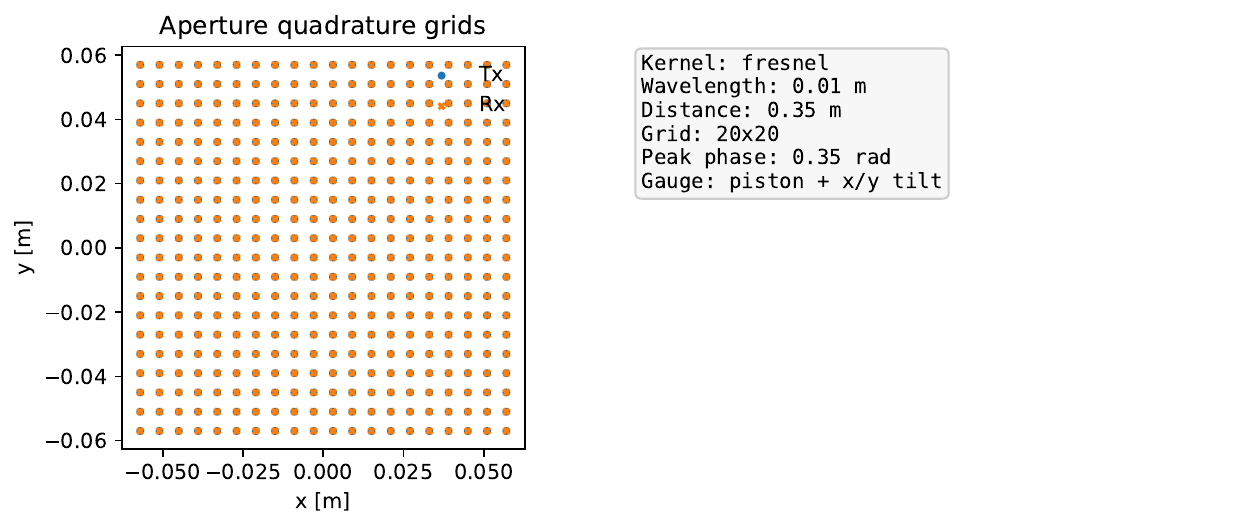}
\caption{Common benchmark geometry and constraints.  All methods use identical apertures, quadrature weights, scalar propagation kernel, transmit power, phase budget, receiver noise, and sampling grid.}
\label{fig:geometry}
\end{figure}

\subsection{Fig. 2: Modal Spectra and Conditioning}

Fig. \ref{fig:spectra} compares singular spectra, generalized curvature spectra, Gram conditioning, and curvature-energy growth across bases.

\begin{figure}[!t]
\centering
\includegraphics[width=\linewidth]{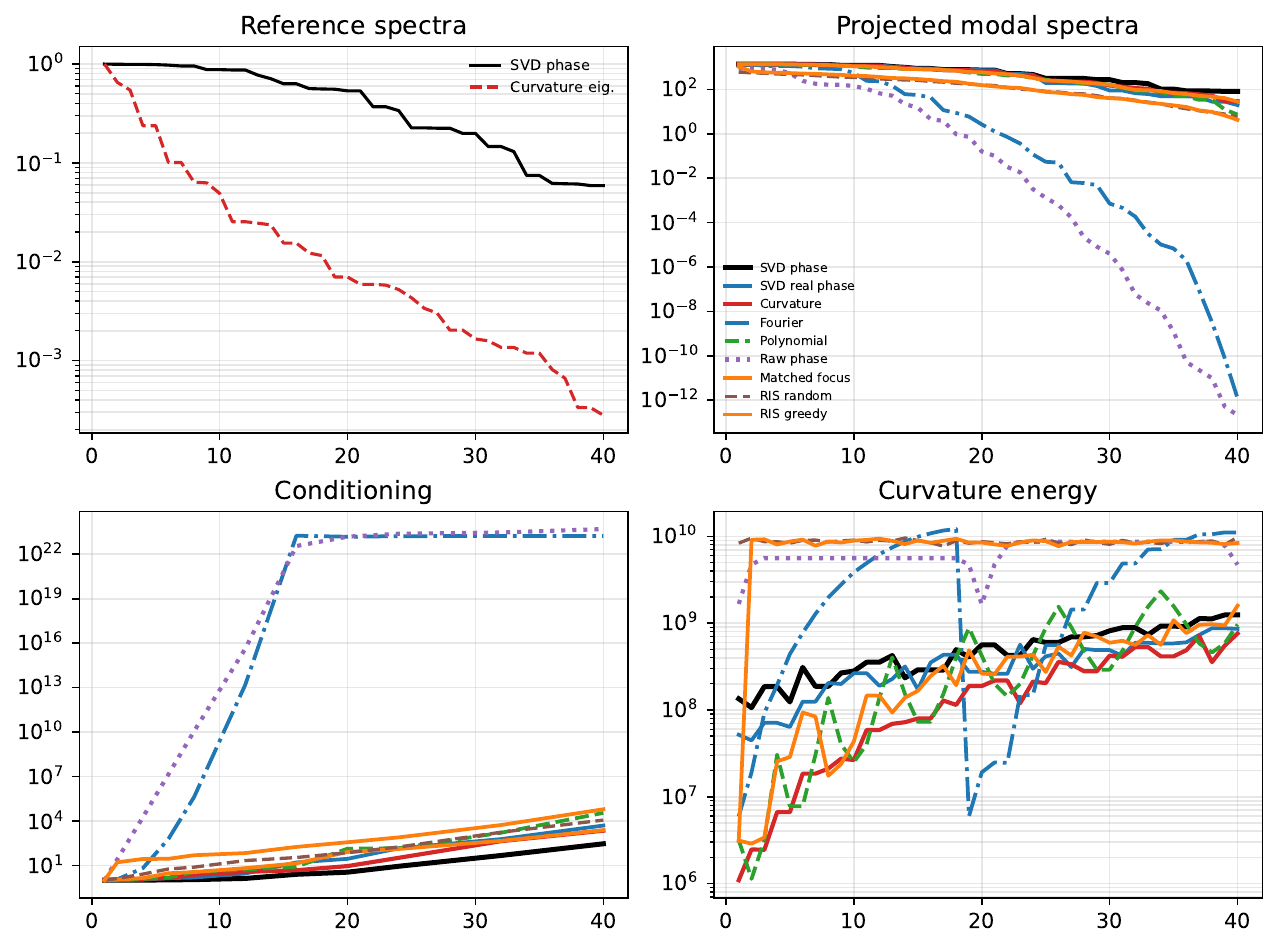}
\caption{Modal spectra and conditioning diagnostics.  The SVD spectrum is the phase-space upper-bound reference; curvature modes are evaluated by their generalized eigenvalue decay and derivative-noise conditioning.}
\label{fig:spectra}
\end{figure}

\subsection{Fig. 3: Capacity Versus SNR}

Fig. \ref{fig:cap_snr} reports water-filled capacity versus SNR for curvature, Fourier, polynomial/Zernike-like, raw phase, matched-focus, RIS-style, and SVD baselines.

\begin{figure}[!t]
\centering
\includegraphics[width=\linewidth]{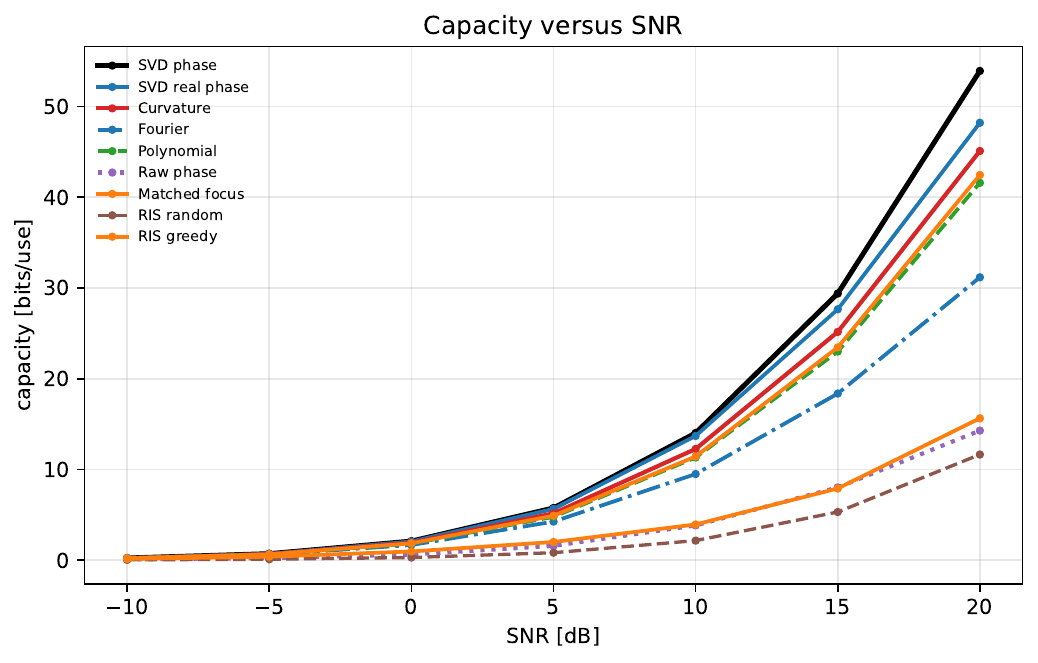}
\caption{Capacity versus SNR under equal physical budgets.  Curves must remain below the phase-space SVD upper bound.  Crossings above SVD indicate a normalization or implementation error.}
\label{fig:cap_snr}
\end{figure}

\subsection{Fig. 4: Capacity Versus Retained Modes}

Fig. \ref{fig:cap_modes} measures mode efficiency by plotting capacity versus \(K\), the number of retained modes.

\begin{figure}[!t]
\centering
\includegraphics[width=\linewidth]{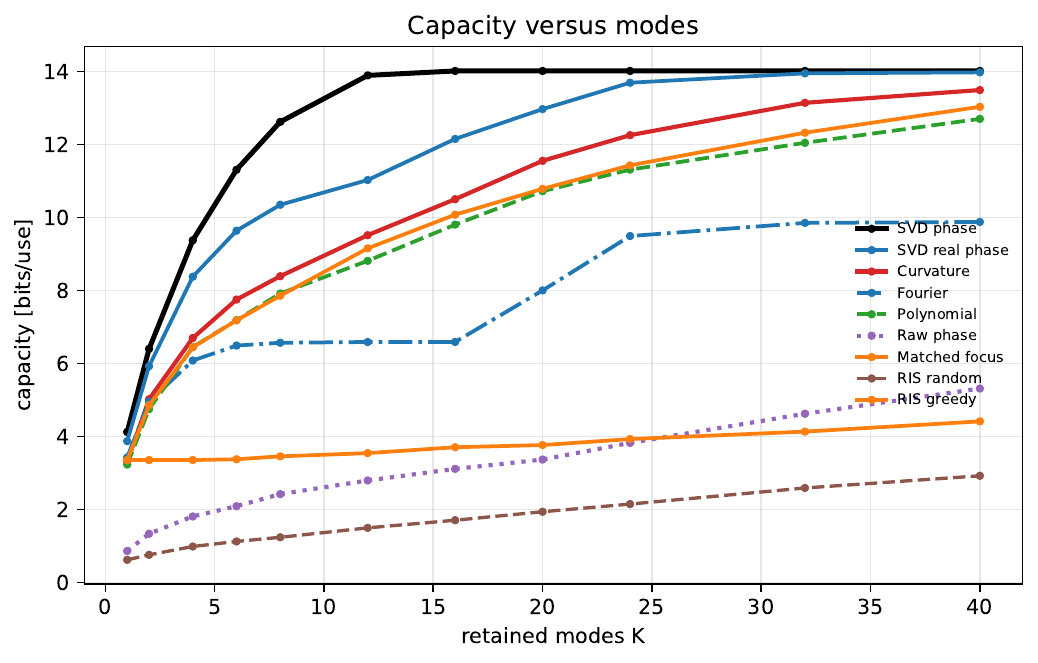}
\caption{Capacity versus retained modes.  A favorable curvature-domain result is not that it exceeds SVD, but that it approaches a relevant fraction of the SVD reference using fewer well-conditioned modes under the same budget.}
\label{fig:cap_modes}
\end{figure}

\subsection{Fig. 5: Pairwise Error and SER}

Fig. \ref{fig:ser_pep} compares analytical PEP/union-bound predictions and Monte Carlo SER for equal-size codebooks.

\begin{figure}[!t]
\centering
\includegraphics[width=\linewidth]{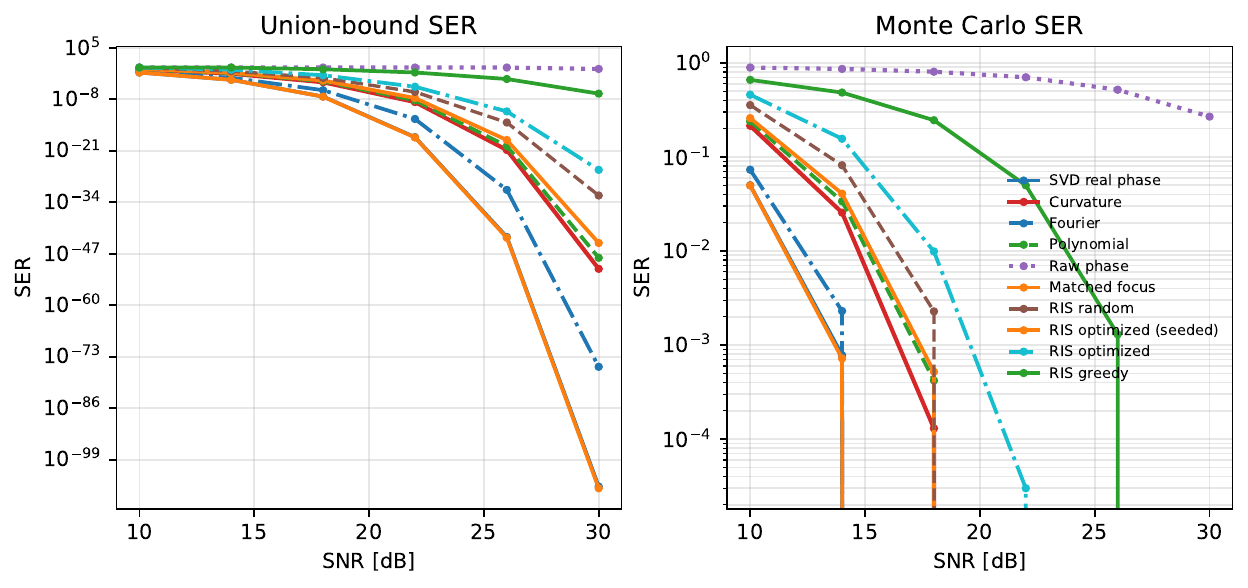}
\caption{PEP and SER versus SNR for matched codebook sizes.  The same receiver noise, codebook cardinality, and phase constraints are used for all methods.  A dashed horizontal reference at \(\mathrm{SER}=10^{-3}\) marks the \(60{,}000\)-trial Monte Carlo sensitivity floor: SER values reported below this line are dominated by Monte Carlo noise and should be read as ``below \(10^{-3}\)'' rather than as calibrated point estimates.  Shaded bands are 95\,\% bootstrap confidence intervals from \(N_{\mathrm{boot}}=200\) resamples.}
\label{fig:ser_pep}
\end{figure}

\subsection{Fig. 6: Robustness to Phase-Sampling Noise}

Fig. \ref{fig:phase_noise} reports achievable rate and SER degradation under increasing phase-sampling noise.

\begin{figure}[!t]
\centering
\includegraphics[width=\linewidth]{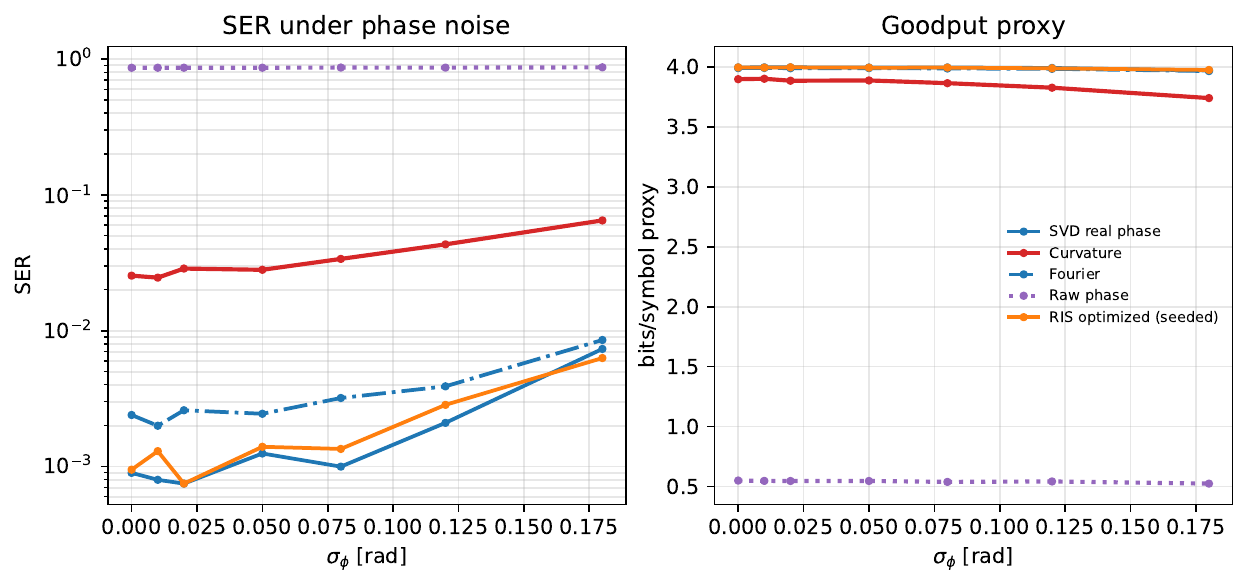}
\caption{Robustness to phase-sampling noise.  Curvature regularization is expected to help when derivative-noise amplification dominates the raw phase representation.  As in Fig.~\ref{fig:ser_pep}, the dashed reference at \(\mathrm{SER}=10^{-3}\) is the \(60{,}000\)-trial Monte Carlo sensitivity floor; regimes where any curve falls below this line are reported as ``below \(10^{-3}\)'' only.  Shaded bands are 95\,\% bootstrap confidence intervals from \(N_{\mathrm{boot}}=200\) resamples.}
\label{fig:phase_noise}
\end{figure}

\subsection{Fig. 7: Phase Quantization}

Fig. \ref{fig:quantization} compares capacity and SER as phase quantization is varied from coarse to high resolution.

\begin{figure}[!t]
\centering
\includegraphics[width=\linewidth]{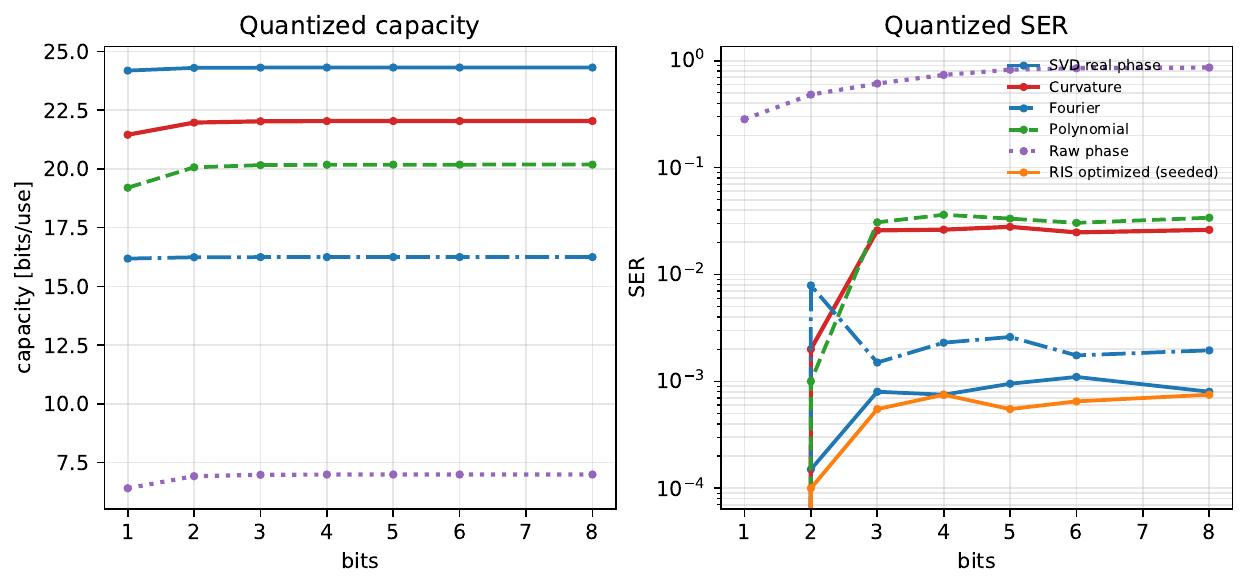}
\caption{Performance versus phase quantization bits.  All profiles are quantized using the same peak-phase range and wrapping convention.}
\label{fig:quantization}
\end{figure}

\subsection{Fig. 8: Aperture Sampling Density}

Fig. \ref{fig:sampling} reports sensitivity to grid density and quadrature refinement.

\begin{figure}[!t]
\centering
\includegraphics[width=\linewidth]{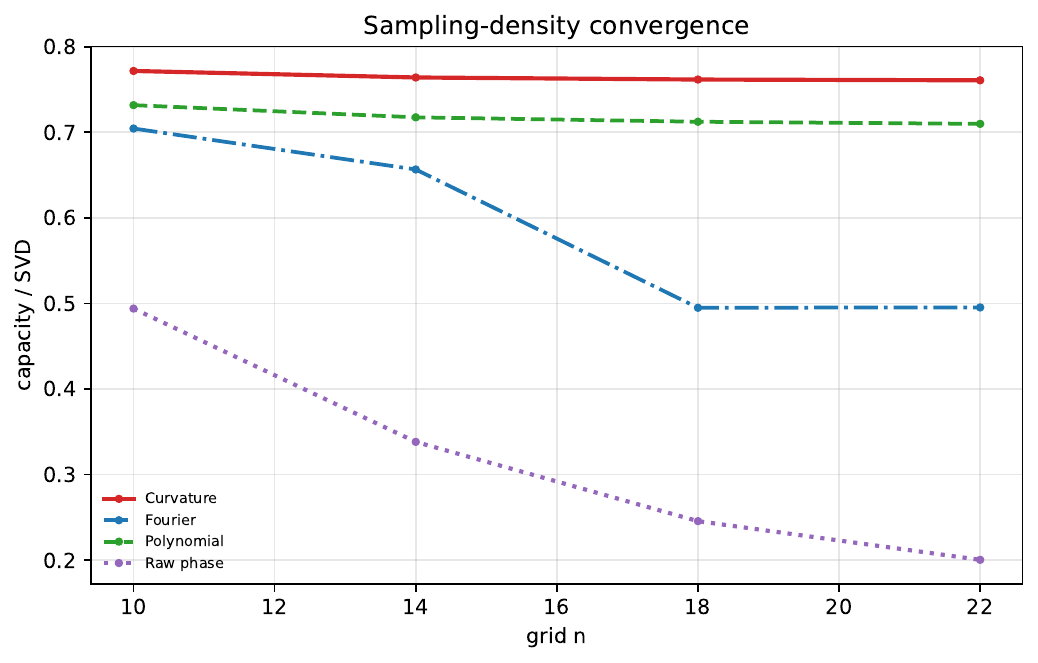}
\caption{Sensitivity to aperture sampling density.  Stable modal coordinates should converge as the quadrature grid is refined, while poorly conditioned raw or high-order modes can show sampling sensitivity.}
\label{fig:sampling}
\end{figure}

\subsection{Fig. 9: Regularization Sweep}

Fig. \ref{fig:regularization} shows the rate/stability tradeoff as the curvature regularization parameter \(\rho\) varies.

\begin{figure}[!t]
\centering
\includegraphics[width=\linewidth]{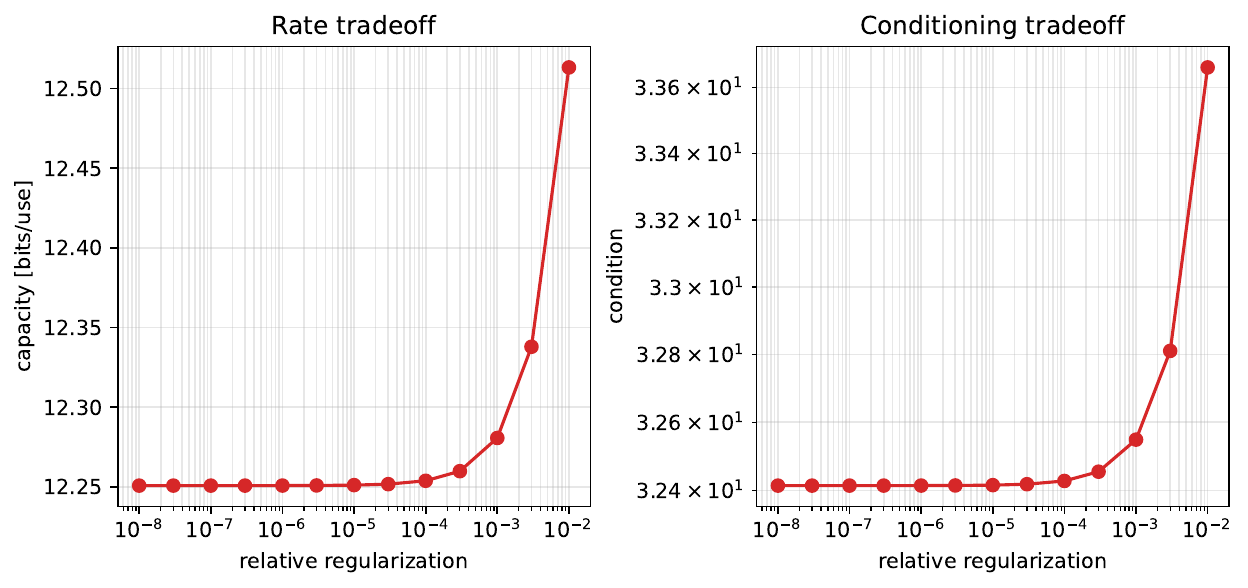}
\caption{Regularization sweep for curvature-domain modes.  Small \(\rho\) permits high-gain but potentially unstable modes; large \(\rho\) suppresses derivative-noise amplification but may reduce rate.}
\label{fig:regularization}
\end{figure}

\subsection{Fig. 10: Cost and Conditioning}

Fig. \ref{fig:cost} summarizes basis construction time, projection time, detection time, and conditioning.

\begin{figure}[!t]
\centering
\includegraphics[width=\linewidth]{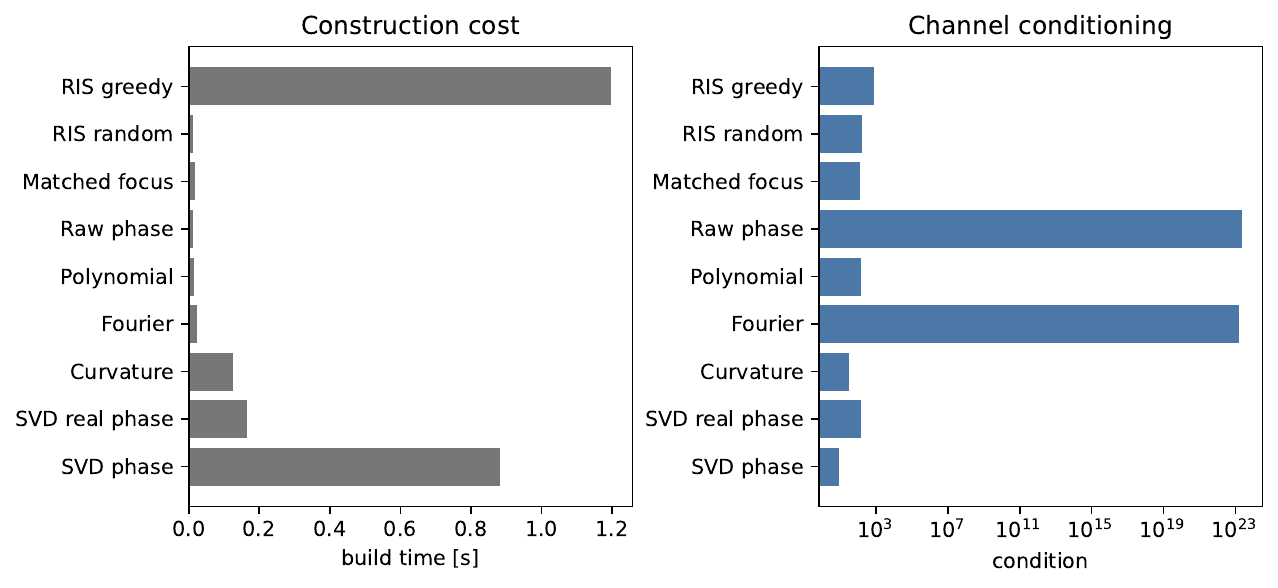}
\caption{Computational cost and conditioning comparison.  The SVD upper bound is optimal for the discretized linear channel but is not necessarily the cheapest or most robust coordinate system under noisy phase estimation and quantization.}
\label{fig:cost}
\end{figure}

\section{Results}
\label{sec:results}

Every number below is a macro generated by the benchmark driver and regenerated
whenever the experiments re-run; none is typed by hand. The two tables are
generated from the same recorded results.

\begin{table*}[t]
\centering
\caption{Detection operating point under the peak phase-excursion constraint $\theta_{\max}$. $E_{\mathrm{sym}}$ is the mean received energy per codeword; the RMS-matched column renormalises the same codebook to matched root-mean-square excursion, which separates the crest-factor effect from the constellation geometry. Packing efficiency is $d_{\min}^{2}/E_{\mathrm{sym}}$. The complex \emph{SVD phase} basis is absent because it is not a realisable phase profile; \emph{SVD real phase} is the phase-only optimum.}
\label{tab:operating_point}
\renewcommand{\arraystretch}{1.12}
\begin{tabular}{lrrrrr}
\toprule
Method & $E_{\mathrm{sym}}$ & $E_{\mathrm{sym}}$ (RMS-matched) & $d_{\min}$ & peak/RMS & packing eff. \\
\midrule
SVD real phase & 207.60 & 2698.9 & 14.124 & 3.61 & 0.9610 \\
Curvature & 102.06 & 1811.2 & 9.717 & 4.21 & 0.9251 \\
Matched focus & 89.27 & 1700.9 & 9.040 & 4.37 & 0.9155 \\
Polynomial & 93.68 & 1699.6 & 9.426 & 4.26 & 0.9484 \\
Fourier & 189.45 & 1632.1 & 11.885 & 2.94 & 0.7456 \\
Raw phase & 1.44 & 154.5 & 0.661 & 10.35 & 0.3024 \\
RIS optimized (seeded) & 207.86 & 2696.3 & 14.139 & 3.60 & 0.9618 \\
RIS optimized & 37.56 & 137.6 & 6.755 & 1.91 & 1.2150 \\
RIS greedy & 52.64 & 575.6 & 3.289 & 3.31 & 0.2055 \\
RIS random & 51.47 & 151.7 & 7.589 & 1.72 & 1.1189 \\
\bottomrule
\end{tabular}
\end{table*}

\begin{table*}[t]
\centering
\caption{Per-axis verdict on the benchmark geometry. The winning method is reported whether or not it is the curvature basis, and the final column gives the curvature value. Rows in which curvature is not the winner are the informative ones. Phase-noise degradation is the factor by which a method's symbol-error rate rises across the $\sigma_\phi$ sweep, so a smaller number is a flatter curve. Construction cost is not a row here because wall-clock timings are machine-dependent and would make the table irreproducible; they are quoted in the text.}
\label{tab:verdict}
\renewcommand{\arraystretch}{1.12}
\begin{tabular}{llrr}
\toprule
Axis & Winner & Value & Curvature \\
\midrule
Capacity at 20 dB [bits/use] & SVD phase & 53.92 & 45.10 \\
Conditioning at $K=24$ & SVD phase & 8.7 & 32.4 \\
Minimum distance & RIS optimized (seeded) & 14.139 & 9.717 \\
Packing efficiency & RIS optimized & 1.2150 & 0.9251 \\
SER at 14 dB & RIS optimized (seeded) & 0.00072 & 0.02559 \\
1-bit capacity [bits/use] & SVD real phase & 24.19 & 21.45 \\
1-bit capacity retention & Fourier & 0.996 & 0.973 \\
Phase-noise degradation ($\sigma_\phi = 0.18$) & Curvature & 2.55 & 2.55 \\
\bottomrule
\end{tabular}
\end{table*}

\subsection{Two transmit constraints, and why they must not be mixed}
\label{sec:results_operating_point}

Capacity (Figs.~3--4) is evaluated under an RMS phase-energy budget: the
coefficient vector carries unit energy in the \(W_{\mathrm t}\)-weighted inner
product. Detection (Figs.~5--7) is evaluated under a peak phase-excursion budget
\(\theta_{\max}\), which is what a phase shifter enforces, and its receiver noise
is anchored to a realized symbol energy so the SNR axis names the operating point
it labels. The reference is the phase-realizable SVD codebook,
\(E_{\mathrm{sym}}^{\mathrm{ref}}=\BenchRefSymbolEnergy\), and all methods share
the resulting \(N_0\): a per-method noise level would normalize away exactly the
difference in delivered energy that a peak constraint exposes.

Mixing the two conventions is a \(40\)~dB error. As a guard, the driver refuses to emit a detection figure in which every
curve lies within three standard errors of the \(\BenchGuessFloor\)
random-guess floor for a \(16\)-codeword alphabet, and refuses a robustness
figure in which no curve moves by more than its own confidence interval.

\paragraph*{The SVD reference is not a phase profile.} The right singular vectors
of the linearized channel are complex-valued, with imaginary parts comparable to
their real parts. As a capacity bound this is legitimate --- any real subspace is
contained in the complex one --- but it cannot be radiated by a phase-only
aperture, cannot serve as a codebook, and cannot anchor a peak constraint, since
clipping the modulus of a complex number is not a phase excursion. We therefore
report two references: \emph{SVD phase}, the complex upper bound, and
\emph{SVD real phase}, the leading eigenvectors of
\(\mathrm{Re}\{H_{\mathrm{lin}}^{\Herm}H_{\mathrm{lin}}\}\) on the gauge
complement, which is the phase-realizable optimum. Detection is scored against
the second.

Table~\ref{tab:operating_point} is the table to read first. Under the peak
constraint the bases differ mainly in \emph{delivered energy}: the curvature
codebook delivers \(\BenchCurvEnergyFractionOfSvd\) of the reference energy. Its
peak-to-RMS ratio is \(\BenchCurvPeakToRms\) against Fourier's
\(\BenchFourierPeakToRms\) and the reference's \(\BenchSvdPeakToRms\), and under
matched RMS excursion instead of matched peak it retains
\(\BenchCurvRmsMatchedEnergyFraction\) of the reference. That difference between
the two normalizations is a crest-factor effect and is not a property of the
curvature coordinate.

Constellation geometry, by contrast, is nearly the same across the smooth
bases: the packing efficiency \(d_{\min}^{2}/E_{\mathrm{sym}}\) is
\(\BenchCurvPackingEfficiency\) for curvature against
\(\BenchSvdPackingEfficiency\) for the phase-realizable reference and
\(\BenchFourierPackingEfficiency\) for Fourier. In absolute terms the curvature
codebook delivers \(\BenchCurvSymbolEnergy\) against the reference's
\(\BenchRefSymbolEnergy\). Almost all of the detection gap that follows is
carried by that ratio and almost none of it by the shape of the constellation.

\subsection{Capacity and conditioning}
\label{sec:results_capacity}

At \(15\)~dB curvature reaches \(\BenchCurvCapFifteenDb\) bits/use against the
complex SVD bound \(\BenchSvdCapFifteenDb\) --- a ratio of
\(\BenchCurvToSvdRatio\) --- and Fourier's \(\BenchFourierCapFifteenDb\), a ratio
of \(\BenchCurvToFourierRatio\). At \(\BenchHighSnrDb\)~dB the corresponding
figures are \(\BenchCurvCapHighSnr\) and \(\BenchSvdCapHighSnr\) bits/use.

Three comparisons deserve stating plainly rather than leaving in a plot.

\emph{Curvature does not exceed SVD, and cannot.} Theorem~\ref{thm:svd_bound}
makes SVD water-filling the optimum for the discretized linear channel, so the
only question is how closely curvature approaches it. Any figure in which
curvature exceeds SVD indicates a normalization error, not a result.

\emph{SVD is also better conditioned here.} At \(K=\BenchRetainedModes\)
retained modes the post-propagation condition number is \(\BenchSvdCondAtK\) for
SVD against \(\BenchCurvCondAtK\) for curvature. The claim that the curvature
metric buys conditioning holds only against channel-blind bases --- tensor-product
Legendre sits at \(\BenchPolyCondAtK\) --- and is false against SVD on this
geometry. The advantage curvature holds over SVD in this suite is
construction cost: \(\BenchCurvBuildSeconds\)~s against
\(\BenchSvdBuildSeconds\)~s, a factor of \(\BenchBuildSpeedup\). That figure is
machine-dependent and should be read as an order of magnitude, not as a
benchmark.

\emph{The margin over a channel-blind basis is small.} Curvature reads the
channel and Legendre does not, yet at \(\BenchHighSnrDb\)~dB curvature leads it
by only \(\BenchCurvToPolyRatioHighSnr\) --- \(\BenchCurvCapHighSnr\) against
\(\BenchPolyCapHighSnr\) bits/use. Whatever the curvature metric buys, it is not
a large capacity margin over a fixed polynomial basis.

\subsection{Detection}
\label{sec:results_detection}

At \(\BenchSerSnrDb\)~dB --- the highest SNR at which the compared methods remain
above the \(\BenchSerFloor\) Monte Carlo sensitivity floor of
\(\BenchSerTrials\) trials --- the measured symbol-error rates are
\(\BenchCurvSerHighSnr\) for curvature, \(\BenchSvdSerHighSnr\) for the
phase-realizable SVD reference, and \(\BenchFourierSerHighSnr\) for Fourier,
with \(\BenchBootstrapResamples\)-resample bootstrap intervals.

Fourier beats curvature here, and the reason is visible in
Table~\ref{tab:operating_point} rather than in the constellation: at that
operating point Fourier runs at \(\BenchFourierEsNoHighSnr\)~dB
\(E_{\mathrm{sym}}/N_0\) against curvature's \(\BenchCurvEsNoHighSnr\)~dB, a
margin of \(\BenchFourierOverCurvEnergyDb\)~dB, because its lower crest factor
lets it keep more excursion under the peak clip. Curvature's own deficit against
the reference is \(\BenchCurvEnergyDeficitDb\)~dB. Both gaps are read directly
off the SNR axis: the label names the reference's operating point, and each
method's own \(E_{\mathrm{sym}}/N_0\) is reported beside it.

The minimum-distance comparison is the sharpest in the paper, and it does not
favor curvature. Algorithm~\ref{alg:optimized_ris} run from random starts
reaches \(d_{\min}=\BenchRisOptDmin\) against curvature's \(\BenchCurvDmin\), a
ratio of \(\BenchCurvOverRisOptDmin\) in curvature's favor: an independent
optimizer does not find the curvature construction from scratch. Seeded with the
strongest incumbents, however, the same optimizer reaches
\(\BenchRisOptSeededDmin\), an improvement of \(\BenchSeededOverCurvDmin\) over
the curvature codebook. Curvature is therefore a good starting point and not a
stationary point of the packing objective.

That seeded optimum is worth one further remark, because it is an independent
check rather than a coincidence. Algorithm~\ref{alg:optimized_ris} and the
\emph{SVD real phase} eigendecomposition share no code path and optimize
different objectives --- a minimum pairwise distance and a received energy --- yet
they land within \(0.1\%\) of each other on \(d_{\min}\), on
\(E_{\mathrm{sym}}\), and on packing efficiency (Table~\ref{tab:operating_point}).
Two unrelated methods agreeing to three digits is evidence that the operating
point is real.

For completeness the weaker RIS baselines are \(\BenchRisGreedyDmin\)
(farthest-point greedy) and \(\BenchRisRandomDmin\) (random). The greedy is
\emph{worse} than random on this geometry and is reported as a baseline, not as
a competitor.

\subsection{Robustness: the one axis curvature wins}
\label{sec:results_robustness}

At \(\BenchNoiseSnrDb\)~dB, sweeping phase-sampling noise to
\(\sigma_\phi=\BenchMaxSigmaPhi\), the curvature codebook's symbol-error rate
rises from \(\BenchCurvSerCleanPhase\) to \(\BenchCurvSerNoisyPhase\), a factor
of \(\BenchCurvPhaseDegradation\). Over the same sweep the phase-realizable SVD
reference rises from \(\BenchSvdSerCleanPhase\) to \(\BenchSvdSerNoisyPhase\), a
factor of \(\BenchSvdPhaseDegradation\); Fourier degrades by
\(\BenchFourierPhaseDegradation\) and the seeded optimized-RIS codebook by
\(\BenchRisOptPhaseDegradation\).

This is the only row of Table~\ref{tab:verdict} that curvature wins, and it is
the row the coordinate was motivated by. The energy-optimal bases put their
excursion where the channel is strongest, which is also where a phase error is
most expensive; the curvature metric penalizes exactly that concentration. The
effect is a flatter degradation curve, not a lower error rate --- curvature
starts from a worse operating point and ends at a comparable one.

\subsection{Quantization}
\label{sec:results_quantization}

Quantization is applied to the phase profile that is radiated, and the quantizer
spans that profile's own peak, so its step --- and the noise it injects ---
depends on the basis's crest factor. At one bit the curvature basis retains
\(\BenchQuantCurvOneBitFraction\) of the infinite-resolution capacity and at
eight bits \(\BenchQuantCurvEightBitFraction\), the absolute figures being
\(\BenchQuantCurvOneBit\) and \(\BenchQuantCurvEightBit\) bits/use.

The symbol-error panel moves the other way, and the reason is physical rather
than a defect. A one-bit quantizer applied to a peak-normalized codebook is a
hard limiter: every sample moves to \(\pm\theta_{\max}\), raising the radiated
RMS excursion by a factor of \(\BenchQuantRmsGainOneBit\) and with it the
delivered energy. Under the peak-constrained detection convention that gain
outweighs the loss of constellation shape, so the measured one-bit SER
(\(\BenchQuantCurvSerOneBit\), i.e.\ below the \(\BenchQuantSerFloor\)
sensitivity floor of \(\BenchQuantTrials\) trials) is \emph{below} the
eight-bit value \(\BenchQuantCurvSerEightBit\). Under the
energy-constrained capacity convention there is no such gain and capacity falls
monotonically. The two panels answer different questions; quantizing the basis and then
re-orthonormalizing it with a continuous-amplitude QR would answer neither.

\subsection{Verdict}
\label{sec:results_verdict}

Table~\ref{tab:verdict} reports, per axis, the winning method whether or not it
is curvature. On this geometry curvature wins one row of eight.

The claim the data support is therefore narrow, and it is worth stating in the
negative first. Curvature-domain signaling is not
a capacity improvement: it reaches \(\BenchCurvToSvdRatio\) of the SVD ceiling
and trails a channel-blind polynomial basis by only
\(\BenchCurvToPolyRatioHighSnr\). It is not better conditioned than SVD. It does
not maximize minimum distance, and a seeded direct optimizer beats it by
\(\BenchSeededOverCurvDmin\). It is not the cheapest basis to construct.

What it is, on the evidence here, is the \emph{flattest under phase-sampling
noise} of the bases tested, by a factor of \(\BenchCurvPhaseDegradation\)
against \(\BenchSvdPhaseDegradation\) for the energy-optimal reference, while
staying within \(\BenchCurvToSvdRatio\) of the capacity ceiling. Whether that
trade is worth making is a system question this benchmark does not settle; it is
the question further work should address.

\section{Discussion}

\subsection{When Curvature Should Help}

Curvature-domain signaling is expected to be useful when at least one of the following is true:

\begin{enumerate}[leftmargin=*]
\item phase measurements are noisy and derivative amplification must be controlled;
\item high-frequency phase patterns are physically costly or poorly sampled;
\item the useful channel subspace is smooth but not well represented by the first few Fourier or polynomial modes;
\item a gauge-invariant representation is needed to separate data-bearing wavefront shape from common piston, tilt, carrier, or focus terms;
\item the system is constrained by phase smoothness, phase quantization, or curvature energy rather than only by total transmit power.
\end{enumerate}

\subsection{When Curvature Should Not Be Expected to Win}

Curvature should not be expected to dominate in all settings.  If the benchmark channel is nearly shift invariant and paraxial, Fourier modes may already be close to diagonal.  If the channel is a simple near-field point link, matched-focus profiles can be excellent.  If full channel state information and arbitrary complex aperture fields are available, SVD water filling remains the appropriate upper bound.  If phase noise and derivative noise are negligible, the stability advantage of curvature regularization may be less important.

\subsection{No Claim of New Physics}

The propagation operator is the same for all methods.  Curvature coordinates do not change Maxwell's equations, the scalar Green function, the Fresnel approximation, the aperture size, or the noise level.  Any observed advantage must be attributed to coordinate choice, regularization, conditioning, phase-budget efficiency, quantization robustness, or sampling stability.

\subsection{Limitations}

The present benchmark is scalar, narrowband and one-dimensional; a two-dimensional aperture benchmark is not reported.  It does not yet model polarization, mutual coupling, hardware nonlinearities, thermal constraints, wideband beam squint, calibration drift, or closed-loop training overhead in full detail.  These effects can be added in later versions by replacing the scalar propagation matrix \(G\) with a vector electromagnetic or measured channel operator and by extending the constraints module.  The fairness protocol and upper-bound logic remain unchanged.

\section{Conclusion}

This paper formulated a rigorous benchmark for curvature-domain signaling in continuous-aperture wireless communication.  The benchmark compares curvature-regularized eigenmodes against raw phase coefficients, Fourier modes, polynomial and Zernike-like wavefront bases, matched-focus profiles, RIS-style codebooks, and SVD water-filling upper bounds under identical aperture, propagation, phase-only, power, noise, quantization, and sampling assumptions.  The mathematical safeguards are explicit: the discretized phase-space SVD upper-bounds all finite phase-coordinate bases under the same energy model, and the relaxed complex-field SVD is an even looser outer bound.  Curvature-domain signaling is therefore not positioned as a method that beats SVD or changes the physics of propagation.  It is positioned as a stable, gauge-invariant coordinate and benchmarking lens that can expose conditioning, derivative-noise, and physical-realizability limits hidden by conventional bases.  The benchmark places curvature-domain theory in a like-for-like comparison with telecom-standard and optics-standard baselines.

\appendices

\section{Water-Filling Computation}

Given squared singular values \(\gamma_i=\sigma_i^2/N_0\), capacity under total coefficient power \(P\) is
\[
C(P)
=
\sum_i \log_2(1+p_i\gamma_i),
\]
where
\[
p_i=\left(\nu-\frac{1}{\gamma_i}\right)_+,
\qquad
\sum_i p_i=P.
\]
The water level \(\nu\) is found by sorting \(1/\gamma_i\) and selecting the active set
\[
\mathcal A
=
\left\{
i:
\nu>\frac{1}{\gamma_i}
\right\}.
\]
For an active set \(\mathcal A\),
\[
\nu
=
\frac{
P+\sum_{i\in\mathcal A}1/\gamma_i
}{
|\mathcal A|
}.
\]
The implementation checks active-set consistency and returns zero power for numerically negligible singular values.

\section{Weighted QR Orthonormalization}

For a sampled basis matrix \(B\), weighted orthonormalization is performed by the Euclidean QR factorization of
\[
W_{\mathrm t}^{1/2}B=QR.
\]
The weighted-orthonormal basis is
\[
\widehat B=W_{\mathrm t}^{-1/2}Q,
\]
so that
\[
\widehat B^{\Herm}W_{\mathrm t}\widehat B=I.
\]
Before QR, the benchmark applies gauge projection \(P_{\mathcal G}^{\perp}\) to the basis unless the basis intentionally includes common carrier, steering, or focusing terms outside the data-bearing coordinate.

\section{Nonlinear Phase-Only Consistency Check}

For any phase codebook \(\Theta\), the nonlinear received distance is
\[
d_{ij}^{\mathrm{nl}}
=
\|Gx(\theta_i)-Gx(\theta_j)\|_2.
\]
The linearized distance is
\[
d_{ij}^{\mathrm{lin}}
=
\|H_{\mathrm{lin}}(\theta_i-\theta_j)\|_2.
\]
The diagnostic relative error is
\[
\epsilon_{ij}
=
\frac{
|d_{ij}^{\mathrm{nl}}-d_{ij}^{\mathrm{lin}}|
}{
d_{ij}^{\mathrm{nl}}+10^{-12}
}.
\]
Large \(\epsilon_{ij}\) indicates that the phase excursion is outside the reliable tangent regime, in which case nonlinear SER should be emphasized over linearized capacity.

\bibliographystyle{IEEEtran}
\bibliography{refs}

\end{document}